
\typeout{IJCAI--24 Instructions for Authors}


\documentclass{article}
\pdfpagewidth=8.5in
\pdfpageheight=11in

\usepackage{ijcai24}
\setlength\textheight{9.1in}

\usepackage{times}
\usepackage{soul}
\usepackage{url}
\usepackage[hidelinks]{hyperref}
\usepackage[utf8]{inputenc}
\usepackage[small]{caption}
\usepackage{graphicx}
\usepackage{amsmath}
\usepackage{amsthm}
\usepackage{booktabs}
\usepackage[switch]{lineno}


\urlstyle{same}



\newtheorem{theorem}{Theorem}
\newtheorem{lemma}{Lemma}
\newtheorem{corollary}{Corollary}
\newtheorem{definition}{Definition}
\newtheorem{hypothesis}{Hypothesis}
\newtheorem{conjecture}{Conjecture}





\pdfinfo{
    /TemplateVersion (IJCAI.2024.0)
}

\author{
    Ben Bals$^{1,2}$
    \and
    Michelle Döring$^{3}$\and
    Nicolas Klodt$^{3}$\And
    George Skretas$^3$\\
    \affiliations
    $^1$Centrum Wiskunde \& Informatica, Amsterdam\\
    $^2$Vrije Universiteit Amsterdam\\
    $^3$Hasso Plattner Institute, Potsdam\\
    \emails
    bjjb@cwi.nl,
    \{michelle.doering, nicolas.klodt, georgios.skretas\}@hpi.de,
}

\title{Dynamic Network Discovery via Infection Tracing} 





\newcommand{\mm}[1]{\textcolor{orange}{Mi: #1}}

\usepackage[disable]{todonotes}
\usepackage{placeins}
\usepackage[normalem]{ulem}
\usepackage[adversary]{cryptocode}
\usepackage{bm}
\usepackage{tabularx}
\usepackage{makecell}
\usepackage{subcaption}
\usepackage{amsmath}
\usepackage{amssymb}
\usepackage{enumitem}
\usepackage{booktabs}
\usepackage{appendix}
\usepackage[appendix=append]{apxproof}
\usepackage{layouts}


\usepackage[ruled,noend]{algorithm2e}

\tracinggroups=1

\ifdefined\N\else\DeclareMathOperator{\N}{\mathbb{N}}\fi
\ifdefined\R\else\DeclareMathOperator{\R}{\mathbb{R}}\fi

\newcommand{\BigO}{\mathcal{O}}
\newcommand{\iphase}{\delta}
\newcommand{\nodeset}[1]{V(#1)}
\newcommand{\edgeset}[1]{E(#1)}

\newcommand{\setsize}[1]{\left|#1\right|}
\newcommand{\nodesetsize}[1]{\setsize{V(#1)}}
\newcommand{\edgesetsize}[1]{\setsize{E(#1)}}
\newcommand{\decc}[1]{\setsize{EC_\iphase(#1)}}
\newcommand{\Tmax}{T_\mathrm{max}}
\DeclareMathOperator{\edgeLabelOp}{\lambda}
\newcommand{\edgeLabel}[1]{\edgeLabelOp(#1)}
\newcommand{\G}{\mathcal{G}}
\newcommand{\defaultGraphName}{\G}

\newcommand{\defaultdecc}{\decc{\defaultGraphName}}

\newcommand{\FMainHack}{\mathtt{Follow}}
\newcommand{\FMain}{\FMainHack}
\newcommand{\FDMainHack}{\mathtt{DiscoveryFollow}}
\newcommand{\FDMain}{\FDMainHack}
\newcommand{\exHack}{\mathtt{Explore}}
\newcommand{\ex}{\exHack}

\newif\ifpaper
\definecolor[named]{benGray}{rgb}{0.61,0.61,0.63}
\newcommand{\dECC}{\(\iphase\)-ecc}
\newcommand{\dECCs}{\(\iphase\)-ecc's}
\newcommand{\ipz}{IPZ}

\setlist[itemize]{noitemsep, nolistsep}

\papertrue

\newtheoremrep{appendixtheorem}[theorem]{$\star$ Theorem}
\newtheoremrep{appendixlemma}[lemma]{$\star$ Lemma}
\newtheoremrep{appendixcorollary}[corollary]{$\star$ Corollary}

\usepackage[capitalise]{cleveref}
\crefname{hypothesis}{Hypothesis}{Hypotheses}
\crefname{conjecture}{Conjecture}{Conjectures}

\begin{document}

\maketitle

\begin{abstract}
Researchers, policy makers, and engineers need to make sense of data from spreading processes as diverse as 
rumor spreading in social networks, viral infections, and water contamination.
Classical questions include predicting infection behavior in a given network or deducing the network structure from infection data.
Most of the research on network infections studies \emph{static graphs}, that is, the connections in the network are assumed to not change. 
More recently, \emph{temporal graphs}, in which connections change over time, have been used to more accurately represent real-world infections, which rarely occur in unchanging networks.
We propose a model for temporal graph discovery that is consistent with previous work on static graphs and embraces the greater expressiveness of temporal graphs.
For this model, we give algorithms and lower bounds which are often tight. We analyze different variations of the problem, which make our results widely applicable and it also clarifies which aspects of temporal infections make graph discovery easier or harder. 
We round off our analysis with an experimental evaluation of our algorithm on real-world interaction data from the Stanford Network Analysis Project and on temporal Erdős-Renyi graphs.
On Erdős-Renyi graphs, we uncover a threshold behavior, which can be explained by a novel connectivity parameter that we introduce during our theoretical analysis.\footnotemark%
\footnotetext{This work is based on the first author's master thesis, which is available at \url{https://arxiv.org/abs/2503.13567}.}
\end{abstract}

\section{Introduction}
\label{sec:org1e46807}

Predicting the spread of infections requires precise knowledge about the network in which they take place.
These spreading processes can be vastly different; they involve everything from diseases, misinformation, marketing material to contaminants in sewage networks. 
All of these can be modeled in a similar fashion, and we can thus utilize a common algorithmic toolkit for their analysis.
Most famously, the \emph{influence maximization problem}, introduced by
\cite{independent-cascade},
wants to find which node in a network should be infected to maximize the number of nodes infected by the spreading process.
Other problems include finding a sensor placement to detect outbreaks as quickly as possible \cite{sensor-placement}, or to estimate the source (of an infection or rumor) from data about the spreading process \cite{pmlr-v216-berenbrink23a,infection-sources-geometric,sir-source-detection}.


One common assumption is that the underlying network is known.
However, this is not true in many real-world scenarios, and thus, we require algorithms for \emph{network discovery}. 
While discovering networks is a fundamental problem in data mining, which can be approached from different angles \cite{estimate-diffusion-networks,park2016information}, one natural approach is to discover the underlying network from the infection data itself.
This idea has been extensively studied \cite{infer-from-cascades,lokhov2016reconstructing,daneshmand2014estimating,netrapalli2012learning}.
In particular, \cite{chistikov2024learning} consider a model where the party wishing to discover the network may even intervene in the spreading process, e.g., by publishing a social media post and watching its spread.
Beyond this application, 
network discovery is an interesting and relevant problem in its own right.
After having discovered the network from infection data, we are free to abstract away from the spreading process and use the resulting network in a host of different ways.
This is especially relevant for the study of social networks, both real-world and online, where infection data can reveal underlying structures that are otherwise difficult to observe.

To the best of our knowledge, every paper that studies network discovery makes the same simplifying assumption: the underlying network is \emph{static}.
That is, the connections of the network do not change over time.
In most applications, that is not a realistic assumption.
For example, if two people are linked in an in-person social network, that does not imply that a disease can spread from one to the other at every point in time, but only when they physically meet.
Motivated by this fact, researchers have begun to study the classical infection analysis tasks on \emph{temporal graphs}.

Temporal graphs are a model of dynamic networks where the edges only exist at some time steps.
This model has received considerable attention from theoretical computer scientists for both foundational problems \cite{michail2016introduction,Danda,Casteigts2021FindingTP,akrida2019temporal}and a growing number of applications, including social networks \cite{casteigts_et_al:DagRep.11.3.16}.
For our purposes, a temporal graph \(\G=(V, E, \edgeLabelOp)\) with lifetime \(\Tmax \in \N\) is a static graph \(G=(V, E)\) with a function \(\edgeLabelOp\colon E \to 2^{[\Tmax]}\) indicating that edge \(e \in E\) exists precisely at the time steps \(\edgeLabel{e}\).
\cite{gayraud-evolving-social-networks},
are the first to study the influence maximization problem on temporal graphs under the independent cascade model (introduced by \cite{independent-cascade}).
\cite{influencers} build on this work and analyze the influence maximization problem on temporal graphs under the SIR model (a standard biological spreading model closely related to the susceptible-infected-resistant model \cite{Hethcote1989}).
However, no work on network discovery on temporal graphs has been conducted yet.
\paragraph*{Our Contribution}\label{contribution}is twofold: (i) we define the temporal network discovery problem as a round-based, interactive two-player game, (ii) we provide algorithms and lower bounds for different parameters of the network discovery game, and validate our results both with theoretical proofs, as well as experiments conducted on real-world and synthetic networks. 
\noindent{\emph{Statements where proofs or details are omitted to the appendix due to
space constraints are marked with $\star$.}}

In \Cref{sec:game}, we define the two-player game. In each round, the \emph{Discoverer} (abbreviated \(\ddv\)) initiates infections and observes the resulting infection chains, aiming to identify the time labels of all edges in as few rounds as possible.
The \emph{Adversary} (abbreviated \(\adv\)) gets to pick the shape and temporal properties of the graph, with the aim of forcing \(\ddv\) to take as long as possible to accomplish their task.

In \Cref{sec:ideal-patient-zero}, we provide the \(\FDMain\) algorithm, which solves the graph discovery problem in \(6\edgesetsize{\G} + \decc{\G} \left\lceil \Tmax/\iphase \right \rceil\) rounds, where \(\decc{\G}\) are the \(\iphase\)-edge connected components. Intuitively, this is a grouping of the edges such that only edges from the same component may influence each other during infection chains.
In \Cref{sec:witnesses}, we prove that the running time of the \(\FDMain\) algorithm is asymptotically tight in the number of edges.
Formally, we prove there is an infinite family of graphs such that any algorithm winning the graph discovery game requires at least \(\Omega(\edgesetsize{\G})\) rounds.
Crucially, this cannot be improved even if \(\ddv\) is allowed to start multiple infection chains per round.
We also prove that there is an infinite family of graphs such that the minimum number of rounds required to win the graph discovery game grows in \(\Omega(n T_\mathrm{max} / (\iphase k))\), where \(k\) is the number of infection chains \(\ddv\) may start per round.

\newcommand{\graphDiscoveryResultsTable}{%
\begin{table}[t]
\caption{\label{tbl:variations-overview}Overview of upper and lower bounds on the number of rounds for different variations of the graph discovery game.
A subscript to a Landau symbol indicates variables that the asymptotic growth is independent of.\todo{is there a reason that the caption is on top?}}
\centering
\begin{tabular}{ll}
\toprule
\textbf{Lower Bound} & \textbf{Upper Bound}\\[0pt]
\midrule
\textbf{Basic model} & \\ \(\Omega_k(m)\) & \(\BigO_k(m + \defaultdecc{} \Tmax / \iphase)\)\\[0pt]
\midrule
\textbf{Infection times only} & \\ \(\Omega_k(m)\) & \(\BigO_k(m + \defaultdecc{} \Tmax / \iphase)\)\\[0pt]
\midrule
\textbf{Unknown static graph} & \\ \(\Omega_{m, \decc{\G}}(n \Tmax / (\iphase k))\) & \(\BigO(n \Tmax )\)\\[0pt]
\midrule
\textbf{Multilabels} & \\ \(\Omega_{\decc{\G}}(n \Tmax / (\iphase k))\) & \(\BigO(n \Tmax )\)\\[0pt]
\midrule
\textbf{Multiedges} & \\ \(\Omega(n \Tmax / (\iphase k))\) & \(\BigO_k(m + \defaultdecc{} \Tmax / \iphase)\)\\[0pt]
\bottomrule
\end{tabular}
\end{table}
}
 \ifpaper \else {\graphDiscoveryResultsTable}\fi

We finish our theoretical analysis in \Cref{sec:extending}, where we explore variations of the graph discovery problem.
We analyze the case where the feedback \(\ddv\) receives about the infection chains is reduced to infection times.
Surprisingly, we are able to show that our \(\FDMain\) algorithm directly translates to this scenario. 
We also discuss what happens if \(\ddv\) has no information about the static graph in which the infections are taking place.
Third, we allow the temporal graph to now contain multiedges or more than one label per edge.

In \Cref{sec:experiments}, we empirically validate our theoretical results.
Using both synthetic and real-world data, we execute the \(\FDMain\) algorithm and observe its performance.
We utilize the natural temporal extension of Erdős-Renyi graphs \cite{casteigts_threshold} as well as the \texttt{comm-f2f-Resistance} data set from the Stanford Large Network Dataset Collection \cite{kumar2021deception}, a social network of face-to-face interactions.
Beyond the running time, we closely analyze which factors affect the performance of the algorithm.
We see that the density of the graph affects the performance since, in dense graphs, it needs to spend less time finding new \(\iphase\)-edge connected components.
On Erdős-Renyi graphs, we provide evidence that this effect is mediated by the number of \(\iphase\)-edge connected components, which exhibits a threshold behavior in \(\Tmax/(\nodesetsize{\G} p)\), where \(p\) is the Erdős-Renyi density parameter.
This prompts us to give a conjecture on this threshold behavior, which mirrors the famous threshold behavior in the connected components of nodes in static Erdős-Renyi graphs \cite{erdos1960evolution}.


\section{Preliminaries} \label{sec:prelims}
\label{prelims}
For \(n \in \N^+\), \(x \in \N\), let \([x, n] \coloneqq \{x, \dots, n\}\) and \([n] \coloneqq [1,n]\).

A \emph{temporal graph} \(\G = (V, E, \edgeLabelOp)\) with lifetime \(\Tmax\) is composed of an undirected \emph{(underlying) static graph} \(G=(V, E)\) together with a \emph{labeling function} \(\edgeLabelOp \colon E \to 2^{[\Tmax]}\), denoting \(e \in E\) being present precisely at time steps \(\edgeLabel{e}\).
We also write \(\nodeset{\G}\) for the nodes of \(\G\), and \(\edgeset{\G}\) for its edges.
A \emph{temporal path} a sequence labeled edge $(e_1,\edgeLabelOp{e_1}),\dots,(e_{\ell},\edgeLabelOp{e_\ell})$ that forms a path in $G$ with strictly increasing labels, i.\,e., for all $i\in[\ell]$,  $\edgeLabelOp{e_{i}}<\edgeLabelOp{e_{i+1}}$.
Apart from \Cref{sec:extending}, we consider \emph{simple} temporal graphs where each edge has exactly one label. Abusing  notation, we therefore also use \(\edgeLabelOp\) as if it were defined as \(\edgeLabelOp\colon E \to [\Tmax]\), and regard temporal paths as the corresponding sequence of nodes.

We use the susceptible-infected-resistant (SIR) model, in which a node is either in a \emph{susceptible}, \emph{infected}, or \emph{resistant} state. 
This model of temporal infection behavior is based on \cite{influencers} (and more historically flows from \cite{a-contribution} and \cite{pastor2015epidemic}).
An \emph{infection chain} in the SIR model unfolds as follows.
At most \(k \in \N\) nodes may be infected by \(\ddv\) at arbitrary points in time, which we call \emph{seed infections} denoted as \(S \subseteq V \times [0,\Tmax]\). 
Otherwise, a node \(u\) becomes \emph{infected at time step \(t\)} if and only if it is susceptible and there is a node \(v\) infectious at time step \(t\) with an edge \(uv\) with label \(t\).
Then \(u\) is infectious from time \(t+1\) until \(t+\iphase{}\), after which \(u\) becomes resistant.
Note that if a susceptible node has two or more infected neighbors at the same time,  it can be infected by any one of them, but only one.
Thus, a given set of seed infections may result in multiple possible infection chains.

The \emph{infection log} of an infection chain records which node was infected by which neighbor at what time.
Formally, the infection log is a set \(L \subseteq V^2 \times [0,\Tmax]\), where \((u,v,t) \in L\) indicates that \(u\) infected \(v\) at time step \(t\). A seed infection at \(u\) at time \(t\) is denoted by  \((u,u,t) \in L\).
The \emph{infection timetable} \(T \subseteq V \times [0,\Tmax]\) records only when a node became infected, omitting which neighbor caused the infection.
We call an infection log \(L\) \emph{consistent} with a given set of seed infections \(S\) if there is an infection chain seeded with \(S\) that produces \(L\).
Consistency for infection timetables is defined analogously.

While multiple consistent infection logs may exist for a set of seeds, there is exactly one consistent infection timetable.

\begin{appendixlemmarep}
Let \(S \subseteq V \times [0,\Tmax]\) be a set of seed infections, and \(L_1\), \(L_2\) be two infection logs consistent with \(S\). Then the induced infection timetables \(T_i \coloneqq \{(v,t) \mid (u,v,t) \in L_i\}\) (for \(i=1,2\))  are the same, that is, \(T_1 = T_2\).
\label{lem:infection-log-to-time-table}
\end{appendixlemmarep}

\begin{appendixproof}
It is easy to see that inductively, in each time step the state of all nodes must be identical under \(L_1\) and \(L_2\).
In particular, for each node, the time step in which it first becomes infected (if it ever gets infected) is the same and thus \(T_1 = T_2\).
\end{appendixproof}

\section{Modeling Temporal Graph Discovery}
\label{sec:game}\todo{ change to TGD everywhere}
We model temporal graph discovery (TGD) as a game, where the Discoverer \(\ddv\) seeks to uncover information about the graph (e.g., edges and their time labels
), while the Adversary \(\adv\) influences the infection behavior (e.g.,~by assigning edge labels) with the goal of delaying $\ddv$'s progress.
See \Cref{game:graph-discovery} for a description of the structure of the game.
\begin{figure}[tbph]
\rule{\linewidth}{0.4pt}
\textbf{Input:} \(\Tmax, \iphase, k, n \in \N\)
\begin{enumerate}[noitemsep]
\item \(\ddv\) learns the nodes and possibly additional information.\label{game:graph-discovery:step-info-sharing}
\item In each round, \(\ddv\) submits up to \(k\) seed infections $S$ and 
\(\adv\) responds with an infection log consistent with $S$. 
\label{game:graph-discovery:step-rounds}
\item To end the game, \(\ddv\) 
submits a temporal graph $\G$ to \(\adv\).
\(\adv\) responds with a temporal graph $\G'$ consistent with all infection logs. If $\G'\neq\G$, \(\adv\) wins. Otherwise, \(\ddv\) wins. \label{game:graph-discovery:step-decision}
\end{enumerate}
\vspace{-0.5\baselineskip} \rule{\linewidth}{0.4pt}
\caption{The temporal graph discovery game \label{game:graph-discovery}}
\end{figure}

As  default, we assume \(\ddv\) learns the static edges in Step \ref{game:graph-discovery:step-info-sharing}.
This is a strong, but convenient assumption, and we later show that many of our results translate to other variations.


We measure the quality of a \(\ddv\) strategy by the number of rounds it needs to win the game.
\begin{definition}
For parameters \(\Tmax{}\), \(\iphase\), \(k\) and a static graph \(G\), define the \emph{graph discovery complexity} as the minimum number of rounds required for \(\ddv\) to win any TGD game.
\end{definition}
We start off with the simplest algorithm: brute-forcing all combinations of nodes $v\in \nodeset{\G}$ and time steps $t\in[\Tmax]$ as seed infections.
This will serve as a baseline for comparing more sophisticated algorithms and possible lower bounds.
\begin{appendixtheoremrep}
There is an algorithm that wins the TGD game in \(\nodesetsize{G}\Tmax\) rounds.
\label{thm:brute-force}
\end{appendixtheoremrep}

\begin{appendixproof}
Consider the algorithm that performs the seed sets \(\{\{(v,t - 1)\} \mid v \in \nodeset{\G}, t \in [0, \Tmax-1]\}\).
Clearly, there is a successful infection along any edge at least once in these rounds.
Thus, the algorithm correctly discovers all edge labels.
\end{appendixproof}

\section{An Algorithm for Graph Discovery} \label{sec:fd-algorithm}
\label{sec:ideal-patient-zero}
Before proposing a better algorithm for TGD, we consider a simpler problem: finding an ideal patient zero.
\begin{definition}
A node \(v\) is an \emph{ideal patient zero (\ipz{}) with time \(t\)} if seed-infecting \(\{(v,t)\}\)  causes every node to become infected. We call \((v,t)\) an \emph{\ipz{} pair}.
\label{def:ideal-patient-zero}
\end{definition}
We adapt the TGD game. \(\ddv\) submits either a pair $(u,t)$, representing its guess for an \ipz{}, or $\bot$ if it believes none exists. $\adv$ responds with a graph $\G=(V,E,\edgeLabelOp)$ consistent with all infection logs. $\adv$ wins if $\ddv$'s guess is incorrect or if $\bot$ is submitted when an \ipz{} exists.

Observe that testing each node at every time it could spread an infection $(u,\edgeLabel{u,v}-1)$ for every neighbor $v$, does not always find an existing patient zero. 
Instead, brute-forcing every combination 
as in Theorem \ref{thm:brute-force}, discovers every \ipz{} pair in $\Tmax\nodesetsize{\G}$ rounds. 
\begin{corollary}
There is a strategy for \(\ddv\) that can win the \ipz{} game in \(\Tmax \nodesetsize{G}\) rounds.
\end{corollary}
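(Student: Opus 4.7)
The plan is to derive this directly from the brute-force argument of Theorem \ref{thm:brute-force}. I would have $\ddv$ issue one seed infection per round, iterating over every pair $(v,t) \in \nodeset{\G} \times [0,\Tmax-1]$, for a total of $\nodesetsize{G}\,\Tmax$ rounds. Each round yields an infection log consistent with the underlying (but still unknown to $\ddv$) temporal graph.

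Next I would observe that $\ddv$ can read the answer to the \ipz{} question directly off of these logs, without ever reconstructing the graph. Specifically, after issuing seed $\{(v,t)\}$, $\ddv$ obtains an infection log $L_{v,t}$; by \Cref{lem:infection-log-to-time-table}, the induced infection timetable is uniquely determined by $(v,t)$ and the underlying graph, so $\ddv$ knows exactly which nodes got infected. Then $\ddv$ declares $(v,t)$ to be an \ipz{} pair iff every node in $V$ appears in the timetable of $L_{v,t}$, and declares $\bot$ iff no such pair is found.

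Finally I would check correctness against the game: in Step~3, $\adv$ must produce a temporal graph $\G'$ consistent with every log it has already issued. Under $\G'$, the seed $\{(v,t)\}$ still induces the same timetable (again by \Cref{lem:infection-log-to-time-table}), so $\ddv$'s decision about each $(v,t)$ matches the ground truth in $\G'$. Hence $\ddv$ wins. I do not foresee a real obstacle here: the only thing to be careful about is that $\adv$'s nondeterministic choice of \emph{who} infected \emph{whom} does not affect \emph{which} nodes become infected, which is exactly the content of \Cref{lem:infection-log-to-time-table}, so the argument is a one-line reduction to the brute-force theorem.
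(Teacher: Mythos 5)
Your proposal is correct and takes essentially the same approach as the paper: run the brute-force seed schedule of \Cref{thm:brute-force} over all pairs $(v,t)$ and observe directly, for each pair, whether the resulting infection reaches all of $V$. Your explicit appeal to \Cref{lem:infection-log-to-time-table} to argue that $\adv$'s freedom in choosing the final consistent graph (and in choosing who infected whom) cannot change which nodes become infected is precisely the justification the paper leaves implicit.
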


\begin{figure*}[t]
\centering
\includegraphics[width=\textwidth]{./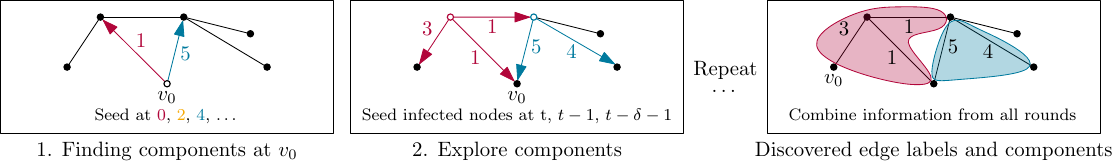}
\caption{\label{fig:follow-execution}An execution of the \(\FMainHack\) algorithm. In this example, \(\iphase = 2\). We perform the initial search for an edge at \(v_0\). Ringed nodes indicate seed infections. Next, we explore the two nodes we found in the initial search by performing seed infections at \(t\), \(t-1\), and \(t - \iphase - 1\) where \(t\) is the time step we observed the node being infected. We repeat that until we have no such seed infections to perform without repetitions. Since no \dECC{} spans all nodes, we conclude that no \ipz{} pair exists.}
\end{figure*}
The number of rounds required by the algorithm is trivially bad in its efficiency, as it is brute-forcing the entire graph instead of using the temporal graph's structure at all.
    
Note 
two edges can only participate in the same infection chain if they are connected by a series of edges with time labels differing by at most $\iphase$.
This leads to a new connectivity parameter for temporal graphs, reflecting the constraint that nodes remain infectious for only a limited time, and thus, infection chains must also respect these timing constraints.
\begin{definition}
Let \(\G\) be a temporal graph and \(\iphase \in \N^+\).
Consider the relation linking two edges if their time difference at a shared endpoint is at most \(\iphase\).
Let \(EC_\iphase\) be the partitioning of edges obtained by taking the transitive closure of this relation.
We call the resulting equivalence classes the \emph{\(\iphase\)-edge connected components} (or \dECCs{} for short) of \(\G\).
\end{definition}
Any infection chain caused by a single seed node (e.g., an \ipz{}) must be contained in a single \dECC{}.
This motivates \Cref{alg:follow}. 
See \Cref{fig:follow-execution} for an illustration of an execution.
\begin{toappendix}
Note there can be components where no infection chain infects the whole component, see \Cref{fig:delta-connected-not-infectious}.
    \begin{figure}[t]
\centering
\includegraphics[width=3.3cm]{./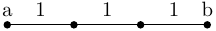}
\caption{\label{fig:delta-connected-not-infectious}
Even though (for any \(\iphase\)) \(a\) and \(b\) each have an edge in the same \dECC{}, no infection seeded at \(a\) can ever infect \(b\).}
\end{figure}
\end{toappendix}
\begin{algorithm}[tbph]
    \DontPrintSemicolon
    \SetKwFunction{FMain}{Follow}
    \SetKwFunction{ex}{Explore}
    \SetKwProg{Fn}{fun}{:}{}
    \Fn{\FMain{$G$, $\iphase$}}{
        1. Pick node $v_0 \in \nodeset{G}$ arbitrarily.

        2. For \(i \in \left [0, \left \lceil \Tmax / \iphase \right \rceil \right]\), perform a round with seed infection \((v_0, i \iphase)\).

        3. For each edge $e=v_0u$ that successfully infects: \ex{$e, \edgeLabel{e}$}.
    }
    \Fn{$\ex{u, t}$}{
    1. For each \(t' \in \{t-\iphase-1, t-1, t\}\):
        If there has been no round with seed infection $(u, t')$, seed an infection $(u, t')$. Store that this has been done.

    2. For each newly infected edge $uv$ with infection time $t$: $\ex{v,t}$.
    }
    \caption{$\FMainHack$ discovers the neighbors of $v_0$. 
    $\exHack$ 
    discovers their respective \dECCs{}.}
    \label{alg:follow}
\end{algorithm} %

The next result is the crucial property allowing us to argue that \Cref{alg:follow} does not miss relevant edges.

\begin{toappendix}
\begin{lemma}
Let \(v\) be a node seed-infected at time step \(t\) in the execution of the \(\FMain\) algorithm (\Cref{alg:follow}), and \(e\) an edge adjacent to \(v\) with label in \(t+1\) to \(t+\iphase\). Then there is a round with a successful infection via \(e\) (from \(v\) or to \(v\)).
\label{lem:follow-edge}
\end{lemma}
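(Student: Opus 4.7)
The plan is to study the round in which $v$ is seeded at time $t$ and then chase the algorithm's recursion until I reach a round in which the edge $e = vu$ with $\tau \coloneqq \edgeLabel{e} \in [t+1, t+\iphase]$ takes part in a successful infection. In that round, $v$ is infectious at time $\tau$, since being seeded at $t$ it is infectious throughout $[t+1, t+\iphase]$. Moreover $u$ cannot be resistant at $\tau$: the only seed is $(v, t)$, so any infection of $u$ in this round happens at a time $\geq t+1$, whereas resistance at $\tau$ would require an infection time $\leq \tau - \iphase - 1 \leq t - 1$. Hence at time $\tau$, either $u$ is susceptible---in which case $e$ fires from $v$ to $u$ and is recorded in the log, and I am done---or $u$ is freshly infected at some time $\tau_1 \in [t+1, \tau-1]$ via a path from $v$ avoiding $e$.

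In the latter case, the algorithm observes $u$ becoming infected at $\tau_1$ and, via its recursive $\ex$ procedure, eventually performs a round with seed infection $(u, \tau_1)$ (or has already done so previously). In that round $u$ is infectious on $[\tau_1+1, \tau_1+\iphase]$, and this interval contains $\tau$ because $\tau - \tau_1 \leq (t+\iphase) - (t+1) = \iphase - 1$. Applying the same dichotomy with the roles of $v$ and $u$ swapped, either $e$ fires from $u$ to $v$ and I am done, or $v$ is first infected at some $\tau_2 \in [\tau_1+1, \tau-1]$, producing in turn a round with seed $(v, \tau_2)$. Iterating, I obtain a strictly increasing sequence $t = \tau_0 < \tau_1 < \tau_2 < \cdots$ of times in the finite window $[t, \tau-1]$ of size at most $\iphase$. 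The chain must therefore terminate in at most $\iphase$ steps, and termination can only mean that $e$ has been recorded in some round's infection log.

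The main obstacle I foresee is the bookkeeping step that turns ``$u$ is observed infected at $\tau_1$'' into ``there is a round of the algorithm with seed infection $(u, \tau_1)$.'' This should follow from a careful reading of \Cref{alg:follow}: every newly observed infection of a node $w$ at time $s$ eventually triggers a call $\ex(w, s)$, and the three candidate seeding times $\{s - \iphase - 1, s - 1, s\}$ used by $\ex$ include $s$ itself, so the required round is guaranteed to be performed unless it was performed earlier---in which case I simply point to that earlier round to continue the chain. A brief sanity check is that in every round along the chain, the alternating endpoint is indeed infectious at $\tau$, which is exactly the interval containment computed above.
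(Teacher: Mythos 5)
Your overall strategy is the same as the paper's: a ping-pong argument that chases the recursion of \(\ex\), using the extra seed rounds it performs and a monotone time quantity (your increasing \(\tau_i\) is the mirror image of the paper's decreasing ``overtaking budget'' \(\edgeLabel{e}-t\)). However, there is a genuine gap in your case analysis: you assume that if \(u\) is still susceptible at \(\tau=\edgeLabel{e}\), then \(e\) necessarily fires, and consequently that any competing infection of \(u\) happens at some \(\tau_1\le\tau-1\). Neither holds. The adversary may credit the infection of \(u\) at time \(\tau\) to a \emph{different} edge with the same label: take \(\edgeLabel{vw}=t+1\), \(\edgeLabel{wu}=\edgeLabel{vu}=t+2\) with \(\iphase\ge 2\); seeding \((v,t)\) leaves \(u\) susceptible at \(t+2\) with two infectious neighbors, and \(\adv\) logs \((w,u,t+2)\), so \(e=vu\) does not fire and \(\tau_1=\tau\). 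Your chain then breaks exactly here: you continue by pointing to the round with seed \((u,\tau_1)\), in which \(u\) is infectious only on \([\tau_1+1,\tau_1+\iphase]\), an interval that does \emph{not} contain \(\tau\) when \(\tau_1=\tau\). So that round cannot produce a successful infection via \(e\), and your interval-containment computation \(\tau-\tau_1\le\iphase-1\) silently relied on the unjustified bound \(\tau_1\le\tau-1\).

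The fix is exactly the paper's choice: when the other endpoint is first infected at time \(\tau_1\) (now allowing \(\tau_1\le\tau\)), recurse on the round with seed \((u,\tau_1-1)\), which \(\ex\) also performs. Since the competing path has hop-length at least two and strictly increasing labels, \(\tau_1\ge t+2\), so the new seed time \(\tau_1-1\ge t+1\) still strictly increases along the chain, and \(\tau\in[\tau_1,\tau_1+\iphase-1]\) holds, so the seeded endpoint is infectious at \(\tau\). Termination is then clean: once the seed time reaches \(\tau-1\), any competing path needs at least two time steps, so the other endpoint cannot be infected at or before \(\tau\) by anything other than \(e\), forcing \(e\) to be the credited edge. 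With that substitution your argument goes through and coincides with the paper's proof.
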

\begin{proof}
Let \(e = uv \in \edgeset{G}\), \(t \in [0, \Tmax]\) such that \(v\) is seed-infected at \(t\) during the execution of the \(\FMain\) algorithm.
We argue the property via downwards induction over \(\edgeLabel{e} - t\) and call this the \emph{overtaking budget}.
Intuitively, it is the time in which \(u\) could be infected by some other path than via \(e\).
Observe that if \(\edgeLabel{e} - t = 1\), the infection attempt along \(e\) must be successful as any other path to \(u\) has hop-length at least two and thus takes at least two time steps because we have strictly increasing paths.

If the infection along \(e\) is successful in the currently considered round, we are done.
So assume the infection attempt along \(e\) is unsuccessful.
Then \(u\) must be infected via a different path \(p'\) at or before \(\edgeLabel{e}\).
Let \(e' \ne e\) be the last edge on \(p'\).
By line 1 in the \(\ex\) algorithm, there must be a round where \(u\) is seed-infected at \(\edgeLabel{e'} - 1\).
Note that since \(p'\) is strictly increasing and has at least hop-length two, we have \(\edgeLabel{e'} -1 > t\). Thus, the overtaking budget for that round is strictly smaller than \(\edgeLabel{e} - t\).
Note that \(v\) and \(u\) swap roles in this recursive application, but this is immaterial as the edges are undirected.

Since the overtaking budget strictly decreases with recursive applications, it must reach 1, at which point the infection attempt along \(e\) must be successful, as argued above.
\end{proof}
\end{toappendix}

\begin{corollary}
If \(\FMain\)  discovers an edge from a \dECC{}, it discovers the whole component.
\label{cor:follow-whole-component}
\end{corollary}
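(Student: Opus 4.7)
My plan is to induct on the chain distance in the \dECC{} relation. By definition of \dECC{}, any edge \(e'\) in the same component as the initially discovered edge \(e_0\) admits a sequence \(e_0, e_1, \dots, e_\ell = e'\) in which consecutive edges share an endpoint and have labels differing by at most \(\iphase\). I would show by induction on \(\ell\) that every \(e_i\) is discovered by \(\FMain\); the base case \(\ell = 0\) is exactly the hypothesis of the corollary.

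The crucial intermediate claim is that whenever \(\FMain\) discovers an edge \(e = uv\), \emph{both} endpoints eventually get seed-infected at the three times \(\edgeLabel{e} - \iphase - 1\), \(\edgeLabel{e} - 1\), \(\edgeLabel{e}\) (up to clipping to \([0, \Tmax]\)). For \(v_0\) this is already achieved by step 2 of \(\FMain\), which seeds \(v_0\) at \(0, \iphase, 2\iphase, \dots\) and so covers every label in \([1, \Tmax]\) in the sense of \Cref{lem:follow-edge}. For any other endpoint \(v\), the claim follows from the recursive structure of \(\ex\): after \(e\) is first discovered with source \(u\) and target \(v\), the algorithm calls \(\ex(v, \edgeLabel{e})\), whose seed of \(v\) at \(\edgeLabel{e} - 1\) makes \(v\) infectious at time \(\edgeLabel{e}\); the edge \(e\) is therefore re-traversed in the opposite direction in the very same round, newly infecting \(u\) at time \(\edgeLabel{e}\), which triggers \(\ex(u, \edgeLabel{e})\) via step 2 of \(\ex\). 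If one of these seed rounds was already performed by an earlier \(\ex\) call, the corresponding recursion was triggered at that earlier point.

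Given this, the inductive step is a direct application of \Cref{lem:follow-edge}. Let \(w\) be the endpoint shared by \(e_{i-1}\) and \(e_i\). By the claim, \(w\) has been seed-infected at the three times \(\edgeLabel{e_{i-1}} - \iphase - 1\), \(\edgeLabel{e_{i-1}} - 1\) and \(\edgeLabel{e_{i-1}}\); the three induced infectious windows \([\edgeLabel{e_{i-1}} - \iphase, \edgeLabel{e_{i-1}} - 1]\), \([\edgeLabel{e_{i-1}}, \edgeLabel{e_{i-1}} + \iphase - 1]\), \([\edgeLabel{e_{i-1}} + 1, \edgeLabel{e_{i-1}} + \iphase]\) together cover \([\edgeLabel{e_{i-1}} - \iphase, \edgeLabel{e_{i-1}} + \iphase]\), which contains \(\edgeLabel{e_i}\). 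Hence there is a seed time \(s\) at \(w\) with \(\edgeLabel{e_i} \in [s+1, s+\iphase]\), and \Cref{lem:follow-edge} produces a round in which \(e_i\) is successfully infected.

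The main obstacle I expect is the ``both endpoints'' claim, specifically verifying that the round in which \(\ex(v, \edgeLabel{e})\) seeds \(v\) at \(\edgeLabel{e} - 1\) really causes \(e\) to be recorded as a newly infected edge, so that \(\ex(u, \edgeLabel{e})\) is indeed spawned by step 2 of \(\ex\). This hinges on the precise semantics of ``newly infected edge with infection time \(t\)'' in \(\ex\) and on the bookkeeping that prevents \(\ex\) from dropping the recursion when a required seed round happens to have been performed by a different \(\ex\) call already. A small amount of extra care is also needed for labels within \(\iphase\) of the endpoints of \([0, \Tmax]\), where some seed times get clipped, but the remaining seeds still cover every label that an actual edge can carry.
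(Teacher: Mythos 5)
Your proof is correct and follows the same route the paper intends: the corollary is stated without its own proof, as an induction along the $\iphase$-ecc chain using \Cref{lem:follow-edge}, with the three seed times of \(\ex\) covering the window \([\edgeLabel{e}-\iphase,\edgeLabel{e}+\iphase]\). Your ``both endpoints'' claim (that seeding \(v\) at \(\edgeLabel{e}-1\) re-traverses \(e\) into \(u\) and spawns \(\ex(u,\edgeLabel{e})\)) is exactly the detail the paper glosses over, and your resolution of it is the right one.
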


This tool in hand, we prove the correctness of \Cref{alg:follow}, which requires at most \(6 \edgesetsize{G} + \left \lceil T_\mathrm{max}/\iphase \right \rceil\) rounds.
\begin{appendixtheoremrep}
The \(\FMain\) algorithm (\Cref{alg:follow}) correctly solves the \ipz{} problem.
\label{thm:follow-correct}
\end{appendixtheoremrep}

\begin{proofsketch}
    Observe that Step 2 always finds all edges adjacent to \(v_0\) and by \Cref{cor:follow-whole-component}, the \(\exHack\) algorithm discovers the whole \dECCs{} of these edges.
    Finally, we show that if an \ipz{} exists, their infection chain must be subset of one of these components.
\end{proofsketch}

\begin{appendixproof}
Let \(v_0\) be the start vertex picked in the algorithm.
Observe that the loop in Step 2 of \(\FMain{}\) discovers at least one edge from each \dECC{} adjacent to \(v_0\).
Applying \Cref{cor:follow-whole-component}, we can see that then the algorithm discovers all edges in these components.
To finish the proof we are left to show that if we know the labels of edges that are in the same \(\iphase\)-connected edge component as any edge at \(v_0\), we can find an \ipz{} if it exists.

Assume that there is an \ipz{}.
Now, take any infection chain caused by seed-infecting this \ipz{} pair.
We construct the \emph{directed tree of successful infections} by only including edges along which there was a successful infection and by directing all edges in the direction along which the infection traveled.
By definition of the \ipz{} pair, this tree spans the entire graph.
Thus, one of its edges is adjacent to \(v_0\).
By definition, the directed tree of infections must be in a single \(\iphase\)-connected edge component.
Therefore, all edge labels of the tree as well as all other edge labels that could affect the outcome of the \ipz{} infection (edges in different components do not interact if there is only one seed infection) are known at the end of the algorithm.

Since we learn all relevant edges through the algorithm, we can decide the existence of an \ipz{}.
\end{appendixproof}
\begin{toappendix}
\begin{theorem}
The \(\FMain\) algorithm terminates after at most \(6 \edgesetsize{G} + \left \lceil T_\mathrm{max}/\iphase \right \rceil\) rounds.
\end{theorem}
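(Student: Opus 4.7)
The plan is to decompose the round count into two contributions: (i) rounds from Step 2 of $\FMain$, and (ii) rounds from the recursive $\ex$ calls. Step 2 iterates $i \in [0, \lceil \Tmax/\iphase \rceil]$ and thus contributes exactly $\lceil \Tmax/\iphase \rceil + 1$ rounds, which is immediate from the loop bounds.

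For the $\ex$ contribution, I would charge rounds to nodes by degree. An invocation $\ex(u, t)$ attempts to seed $(u, t')$ for $t' \in \{t - \iphase - 1, t - 1, t\}$ and skips any seeding already performed, thanks to the ``Store that this has been done'' bookkeeping in line~1 of $\ex$; so each $\ex(u,t)$ contributes at most three distinct new rounds. Next, I would observe that $\ex(u, t)$ is only ever invoked because some incident edge $uv$ of $u$ was newly infected with label $t$ (either in Step~3 of $\FMain$ or in line~2 of a prior $\ex$ call), so the set of times for which $\ex(u, \cdot)$ is ever called is a subset of $\{\edgeLabel{e} : e \text{ is incident to } u\}$. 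Since the temporal graph is simple (one label per edge), this set has size at most $\deg{u}$, and thus the distinct seedings at $u$ originating from $\ex$ number at most $3 \deg{u}$. Summing over all nodes gives at most $\sum_{u \in V(G)} 3 \deg{u} = 6 \edgesetsize{G}$ seedings from $\ex$.

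Combining both contributions yields at most $6 \edgesetsize{G} + \lceil \Tmax/\iphase \rceil + 1$ rounds in total, which fits into the claimed bound $6 \edgesetsize{G} + \lceil \Tmax/\iphase \rceil$ whenever $\edgesetsize{G} \ge 1$ (the constant is absorbed by the degree sum). The main subtlety, and likely the principal obstacle, is ensuring that seedings at $v_0$ are not double-counted between Step~2 and the $\ex$-generated rounds. I would argue that $\ex$ is never invoked on $v_0$, because every edge incident to $v_0$ is already discovered during Step~3 of $\FMain$, and hence cannot later appear as a ``newly infected'' edge triggering an $\ex(v_0, \cdot)$ call. Verifying this separation, together with the observation that the ``newly infected'' flag permanently marks discovered edges so that each edge contributes to the $\deg{u}$ bound on at most one endpoint, is what makes the charging argument tight and avoids over- or double-counting.
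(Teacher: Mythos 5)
Your proof is correct and follows essentially the same route as the paper's: the paper likewise splits the count into the \(\left\lceil \Tmax/\iphase \right\rceil\) rounds of Step~2 plus at most three seed times per endpoint per edge for \(\ex\) (namely \(\edgeLabel{e}\), \(\edgeLabel{e}-1\), \(\edgeLabel{e}-\iphase-1\)), i.e.\ \(6\edgesetsize{G}\) rounds, and your degree-sum \(\sum_{u} 3\deg{u} = 6\edgesetsize{G}\) is exactly that charge phrased per node. The only place you diverge is in worrying about the off-by-one from the loop \(i \in [0, \lceil \Tmax/\iphase\rceil]\): your supporting claim that \(\ex\) is never invoked on \(v_0\) is not actually guaranteed (an edge at \(v_0\) can fail its Step-2 infection attempt because the other endpoint was already reached along a longer path, so it is only ``newly infected'' in a later \(\ex\) round, which then recurses back into \(v_0\)), but the paper's own proof silently drops this same \(+1\), so this does not separate your argument from theirs.
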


\begin{proof}
The search of edges adjacent to \(v_0\) takes \(\left \lceil T_\mathrm{max}/\iphase \right \rceil\) rounds.
The \(\ex\) sub-algorithm uses at most 6 rounds per edge \(e\) (at most \(\edgeLabel{e}\), \(\edgeLabel{e}-1\), and \(\edgeLabel{e}-\iphase-1\) per endpoint).
This yields the desired bound.
\end{proof}
\end{toappendix}

\label{fd-algorithm}
We now extend this idea to obtain a better TGD algorithm, see \Cref{alg:follow-discover}.
Note that its running time does not only depend on the static, but also the temporal structure of the graph.
Recall that \(\FMain\) explores precisely the \dECCs{} adjacent to the start node \(v_0\) and 
note that a graph is discovered if and only if all its \dECCs{} are discovered. 

    \begin{algorithm}[tbhp]
        \DontPrintSemicolon
        \SetKwFunction{FDMain}{DiscoveryFollow}
        \SetKwFunction{ex}{Explore}
        \SetKwProg{Fn}{fun}{:}{}
        \Fn{\FDMain{$G$, $\iphase$}}{
            \While{there is a node $v_0$ with an adjacent edge for which the label is still unknown}{
                In steps of size $\iphase$ perform rounds such that $v_0$ is seed-infected at $0, \iphase, 2\iphase, \dots, \Tmax$.

For each edge $e=v_0u$ which successfully infects from $v_0$ to a neighbor: $\ex{$u$, $\edgeLabel{e}$}$.
            }
        }
    \caption{The TGD extension of \(\FMainHack \).}
    \label{alg:follow-discover}
    \end{algorithm}

\begin{appendixtheoremrep}
\Cref{alg:follow-discover} wins any instance of the TGD game on \(\G\) in at most \(6\edgesetsize{\G} + \decc{\G} \left\lceil \Tmax/\iphase \right \rceil\) rounds.
\label{thm:follow-discover}
\end{appendixtheoremrep}

\begin{appendixproof}
The correctness follows since, by \Cref{cor:follow-whole-component}, every call of the \(\ex\) algorithm discovers all edges which share a \dECC{} with one edge adjacent to the start vertex (i.e.,~the one picked as \(v_0\) in step 1).
Since we explore all \dECCs{}, we discover all edges.

For the running time, observe that the loop in  \cref{alg:follow-discover} runs at most \(\decc{\G}\) times.
Since by the same argument as before, there are at most \(6\) infections per edge (in essence we avoid duplication), the factor only applies to the second summand, yielding the stated bound.
\end{appendixproof}

\section{Lower Bounds for Graph Discovery} \label{sec:witnesses}
We build a toolkit for proving lower bounds in the TGD game.
Initially, every edge could have any label. As the game progresses, information is revealed to $\ddv$. 
A successful infection attempt determines the edge's label, while an unsuccessful attempt where one endpoint remains susceptible reduces the possible labels by at least one.
With this, we can use a potential argument to establish lower bounds on TGD complexity.

\begin{appendixtheoremrep}
Let \(\G\) be a temporal graph and \(T_\mathrm{max}\), \(\iphase\), \(k\) parameters for the TGD game. For a sequence of seed infection sets \(S_1, \dots, S_a\), define \(\Phi(i)\) as the sum of the sizes of the sets of consistent labels over all edges after rounds 1 to \(i\). Then, \(S_1, \dots, S_a\) is a \emph{witnessing schedule} if \(\Phi(a) = \edgesetsize{\G}\).
\label{thm:potential-argument}
\end{appendixtheoremrep}

\begin{appendixproof}
Clearly, if there is more than one consistent label for some edge, \(\adv\) will always win step \ref{game:graph-discovery:step-decision} in the TGD game (\cref{game:graph-discovery}).
\end{appendixproof}

Recall that the brute-force algorithm for TGD takes \(\BigO(n \Tmax)\) rounds (\Cref{thm:brute-force}).
In the worst case, this can be improved by at most a factor of \(\iphase k\).
\begin{theorem}
For all \(\iphase, k \in \N^+\) and \(\Tmax \ge 4\), there is an infinite family of temporal graphs \(\{\G_n\}_{n \in \N}\) such each \(\G_n\) has \(\Theta(n)\) nodes and the minimum number of rounds required to win the TGD game on it grows in \(\Omega(n (T_\mathrm{max} - 3) / (\iphase k))\). 
\label{thm:temporally-connected-graph-discovery-lower-bound}
\end{theorem}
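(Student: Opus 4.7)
The plan is to exhibit, for each $n$, a temporal graph $\G_n$ on which $\adv$ can drag the game out, and then invoke the potential argument of \Cref{thm:potential-argument} to turn this into a lower bound. The natural choice is to let $\G_n$ be a perfect matching on $n$ vertices: $n/2$ pairwise vertex-disjoint edges $e_1,\dots,e_{n/2}$, so that $\nodesetsize{\G_n}=n$ and $\edgesetsize{\G_n}=n/2$. The crucial feature of the matching structure is that no two edges share a vertex, so every seed interacts with at most one edge and no infection ever cascades across edges. This lets me control the decrease in the potential $\Phi$ strictly edge by edge.

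For the adversary, I maintain a set $C_e \subseteq [\Tmax]$ of still-possible labels for each edge $e$, initialised to $C_e = [\Tmax]$, which gives $\Phi(0) = (n/2)\Tmax$. When $\ddv$ submits a seed set $S$ with $|S| \le k$, I process the seeds sequentially; for a seed $(v,t)$ incident to the edge $e = vu$, if $C_e \setminus [t+1,t+\iphase]$ is nonempty, $\adv$ answers that no infection crosses $e$ from this seed and updates $C_e \leftarrow C_e \setminus [t+1,t+\iphase]$, otherwise $\adv$ is forced and picks an arbitrary $\ell \in C_e$, reports $(v,u,\ell)$ in the log, and sets $C_e = \{\ell\}$. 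Each seed therefore shrinks $|C_e|$ by at most $\iphase$. Because seeds in the same round that touch different edges act independently (no vertex is shared), the round-wise drop in $\Phi$ satisfies $\Phi(i-1) - \Phi(i) \le \iphase k$. No $C_e$ ever becomes empty and, because the edges are vertex-disjoint and therefore their dynamics decouple, any global labeling with $\lambda(e) \in C_e$ is consistent with the entire transcript.

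By \Cref{thm:potential-argument}, $\ddv$ cannot win until $\Phi$ drops to $\edgesetsize{\G_n} = n/2$, which requires at least
\[ \frac{(n/2)(\Tmax - 1)}{\iphase k} \;=\; \Omega\!\left(\frac{n(\Tmax - 3)}{\iphase k}\right) \]
rounds (the final $\Omega$ uses $\Tmax \ge 4$). The main obstacle is exactly the global-consistency check I glossed over: every incremental ``no infection'' answer has to be extendable to some genuine temporal labeling at the end of the game. The matching construction is tailor-made to make this trivial because the $C_e$'s are decoupled and no cascade ever forces two edges to interact; the same strategy would require significantly more care on denser graphs, where suppressing an infection on one edge might force inconsistencies on adjacent edges via cascading SIR dynamics.
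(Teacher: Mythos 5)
Your proof is correct for the statement as quoted, and the engine of the argument is the same as the paper's: an adaptive adversary that answers ``infection failed'' while more than one label remains consistent, combined with the potential argument of \Cref{thm:potential-argument} and the observation that each of the at most \(k\) seeds eliminates at most \(\iphase\) candidate labels per round, so \(\Phi\) drops by at most \(\iphase k\) per round. Where you genuinely diverge is the construction. You use a perfect matching, which makes the consistency bookkeeping trivial because the candidate sets \(C_e\) are fully decoupled; the paper instead builds a path \(v_1,\dots,v_{n-2}\) whose odd-indexed edges carry the unknown labels, and attaches two hub vertices \(v_{n-1},v_n\) with fixed labels \(\Tmax-2,\Tmax-1,\Tmax\) connecting to everything. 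The hubs are not decoration: the theorem's label and the downstream \Cref{thm:temporally-connected-graph-discovery-lower-bound-infection-times-only} make clear the intended family is \emph{temporally connected}, so that the \(\Omega(n\Tmax/(\iphase k))\) cost cannot be dismissed as an artifact of having \(\Theta(n)\) mutually unreachable pieces that any algorithm would obviously have to probe separately. Your matching is maximally disconnected (indeed every edge is its own \dECC{}, so the \(\FDMain\) upper bound already predicts \(\Theta(n\Tmax/\iphase)\) rounds there), which makes your bound weaker as a statement about the problem even though it formally meets the quoted claim. The paper pays for connectivity with the extra care of fixing the hub labels near \(\Tmax\) so that infections through the hubs reveal nothing about the relevant edges; if you want your argument to support the connected version, you would need to add a comparable gadget and re-verify that cascades through it never eliminate more than \(O(\iphase)\) labels per seed.
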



\begin{proof}
Let \(n, \iphase \in \N^+\) and \(\Tmax \ge 4\), with $n$ even for simplicity. We construct a temporal graph $\G_n$ on vertices $v_1,\dots,v_n$. The edges are as follows (with some edge labels already given which \(\adv\) will always assign).
First, \(v_1, \dots, v_{n-2}\) form a path where every even-indexed edge has label \(\Tmax\),
Second, \(v_{n-1}\) has an edge to each \(v_1, \dots v_{n-2}\) with fixed label \(T_\mathrm{max}-2\),
\(v_{n-1}\) and \(v_{n}\) share an edge with fixed label \(T_\mathrm{max}-1\), and
\(v_{n}\) has an edge to each \(v_1, \dots v_{n-2}\) with fixed label \(T_\mathrm{max}\).

Assume \(\ddv\) acts arbitrarily. 
%
For edges with fixed labels, $\adv$ responds that label.
For all other edges, $\adv$ replies ``infection failed'' as long as at least one consistent label would remain; otherwise, it replies with any consistent label.

We apply the potential argument from \Cref{thm:potential-argument} to all edges that do not have a fixed label. We call those \((n-3)/2\) edges \emph{relevant}.
Initially, each relevant edge has $\Tmax$ possible labels, so \(\Phi(0) \ge (n-3)/2 \cdot T_\mathrm{max}\).
In each round, for every node that gets infected, at most $\iphase$ labels are removed per adjacent relevant edge, either due to failed infection attempts, or a successful infection with $\iphase$ or fewer labels remaining. Thus, the potential decreases by at most $\iphase k$ per round.
Dividing the initial potential by this maximum decrease shows that any $\ddv$ needs at least
$
\left \lfloor (n\cdot (T_\mathrm{max}-3))/(2 \iphase k) \right \rfloor
$
\todo{why did the -3 move from n to \(T_{max}\)?}
rounds. 
\todo{You should probably mention that infections never actually spread further than 1 edge to say that you really only get information about one edge. And you should take the last two labels out of the potential because for those you can actually get information from a spread infection.}
\end{proof}

\subsection{Witness Complexity}
\label{sec:org66bff96}
The witness complexity of a temporal graph is the minimum number of rounds required for a \(\ddv\), knowing all labels, to convince an observer of the labeling's correctness. 
\begin{definition}
A length \(a\) \emph{witnessing schedule} for a temporal graph \(\G\) is a sequence of seed infection sets \(S_1, \dots, S_a\) such that after performing \(a\) rounds with the respective seed infection sets, all labels in the graph are uniquely determined by the logs of these rounds.
The \emph{witness complexity} of a temporal graph \(\G\) is the length of its shortest witnessing schedule.
\end{definition}
Observe that the graphs constructed in the proof of \Cref{thm:temporally-connected-graph-discovery-lower-bound} have witness complexity \(O(n)\), which is significantly lower than their graph discovery complexity.
However, witness complexity is a powerful tool for establishing lower bounds on graph discovery complexity, particularly when seeking bounds independent of \(\Tmax\). The following lemma states the formal relationship between the two complexities.
\begin{appendixlemmarep}
Let \(\G\) be a temporal graph and \(T_\mathrm{max}\), \(\iphase\), \(k\) parameters as defined above. Then the witness complexity of \(\G\) in this instance is at most as large as the graph discovery complexity for the same parameters.\todo{I rephrased it slightly. Does it make sense now?}
\end{appendixlemmarep}
\begin{appendixproof}
This follows directly from the fact that recording the seed infection sets of any online algorithm yields a witnessing schedule upon termination.
\end{appendixproof}

Note however, that the witness complexity technique can only ever be at most the number of edges in a graph.
\begin{appendixtheoremrep}
For any instance \((\G, \Tmax{}, \iphase, k)\), the witness complexity is at most \(\edgesetsize{\G}\).
\end{appendixtheoremrep}

\begin{appendixproof}
Let \(e_1, \dots, e_{\edgesetsize{\defaultGraphName}}\) be an arbitrary numbering of the edges, and set \(S_i \coloneqq \{(e_i, \edgeLabel{e_i}-1\}\).
This is a witnessing schedule of length \(\edgesetsize{\defaultGraphName}\).
\end{appendixproof}

We now show that this worst case is actually tight, and there are graphs that asymptotically require about one round per edge to witness correctly.
Observe that this bound is irrespective of \(k\),  thus does not have one of the major shortfalls of our previous lower bounds for the TGD game.
\begin{appendixtheoremrep}
There is an infinite family of temporal graphs \(\{\G_n\}_{n \in \N}\) whose witness complexity grows in \(\Omega_k(\edgesetsize{\G_n})\).
\label{thm:omega-m-witness-complexity}
\end{appendixtheoremrep}
We will now define \(\{\G_n\}_{n \in \N}\). Then we formalize when an infection attempt is \emph{relevant}, in the sense that it makes meaningful progress towards winning the witness complexity game.
Lastly, we show that there can be at most one such relevant infection attempt per round in \(\{\G_n\}_{n \in \N}\).
    
Intuitively, each graph in \(\{\G_n\}_{n \in \N}\) contains four node sets: $L$, $R$, $B$, and $C$. The nodes in $L$ and $R$ form a complete bipartite graph, where the edges between a node in $L$ and all nodes in $R$ are assigned to distinct \textit{phases}. 
The nodes in $R$ are connected such that once a node in $R$ is infected, it spreads the infection through $R$ without additional input from $L$.
The sets $B$ and $C$ serve as gadgets to ensure that at the end of each phase, all nodes in $L$ and $R$ become infected via edges outside $L$ and $R$. This prevents further information being gathered about the labels between $L$ and $R$, ensuring the lower bound on the complexity of the discovery process.

For $x\in \N$, we construct a temporal graph $\G_n$ of size $n\coloneqq 5x$. The vertices are given by $V(\G_n)\coloneqq L \sqcup R \sqcup B \sqcup C$ with $L \coloneqq \{\ell_1, \dots, \ell_x\}, R \coloneqq \{r_1, \dots, r_{2x}\}, B \coloneqq \{b_1, \dots, b_x\}, C \coloneqq \{c_1, \dots, c_x\}.$ Denote by $R_2$ the nodes in $R$ with an even index.
Then the edges are given by $E(\G_n)\coloneqq L \times R_2 \sqcup R^2 \sqcup L\times B \sqcup B\times R\sqcup B^2 \sqcup C \times R \sqcup \{b_ic_i\mid i \in [x]\}$.
The labels on the edges are defined as follows:

    Let \(P = \{p_1, \dots, p_x\}\) be a set of \(x\) edge-disjoint Hamiltonian paths on \(R\). Such paths exists by \cite{axiotis_approx}.
    Note that by their construction, for every path, there are at most two nodes with an odd index in a row.
    Assume without loss of generality that each \(p_j\) begins at \(r_{2j}\), and write \(r_{p(i,j)}\) for the \(j\)-th node from \(R_2\) on the path \(p_i\). Denote by \(p^{-1}(i,j)\) its index on \(p_i\).
    Let \(k\) be arbitrary, \(T_\mathrm{max} \coloneqq x(4x+1)\), and \(\iphase \coloneqq 4x+1\).
    \allowdisplaybreaks
    Now,
    \begin{align*}
    &\text{for }\ell_i \in L, j \in [x], \text{ set } 
    &&\hspace{-0.7em}\edgeLabel{\ell_i, r_{p(i,j)}} \coloneqq i\iphase + 4(p^{-1}(i,j))+1,  \\
    &\text{for } p_q \in P, j\in[x], \text{ set } 
    &&\hspace{-0.7em}\edgeLabel{(p_q)_j, (p_q)_{j+1}} \coloneqq q\iphase + 2j+1,  \\
    &\text{for } b_i \in B, j \in [x], \text{ set } 
    &&\hspace{-0.7em}\edgeLabel{b_i, r_{(p_i)_j)}} \coloneqq i\iphase + 2j + 2,  \\
    &\text{for } b_i, b_j \in B, i<j, \text{ set } 
    &&\hspace{-0.7em}\edgeLabel{b_i, b_j} \coloneqq (i+1)\cdot\iphase -2, \\
    &\text{for } i \in [x] \text{ set } 
    &&\hspace{-0.7em}\edgeLabel{b_i, c_i} \coloneqq (i+1)\cdot\iphase - 1, \\
    &\text{for } i > j \in [x] \text{ set } 
    &&\hspace{-0.7em}\edgeLabel{b_i, c_j} \coloneqq (i+1)\cdot\iphase - 2, \\
    &\text{for } c_i \in C, r_j \in R \text{ set } 
    &&\hspace{-0.7em}\edgeLabel{c_i, r_j} \coloneqq (i+1)\cdot\iphase, \\
    &\text{for } \ell_i \in L, b_j \in B, \text{ set } 
    &&\hspace{-0.7em}\edgeLabel{\ell_i, b_j} \coloneqq (j+1)\cdot\iphase.
    \end{align*}

    Observe that the temporal edges of $\G_n$ can be partitioned into \(x\) sets, each having labels in a fixed interval of size \(\iphase\) and for \(i \in [x]\), let \(E_i = \{e \in E(\G_n) \mid \edgeLabel{e} \in [\iphase \cdot i, \iphase \cdot (i+1)-1]\}\).
    We refer to the edges $E_i$ and corresponding intervals as \emph{phases}.
Now, an infection attempt between some \(l_i \in L\) and \(r_{2j} \in R_2\) is called \emph{relevant} if (i) it happens at \(\edgeLabel{l_i r_{2j}}\) and is successful, or (ii) it happens at \(\edgeLabel{l_i r_{2j}} - 1\), is unsuccessful, and exactly one endpoint was infected before the attempt.

The result then follows from the following three properties: (1) for each edge in \(L \times R_2\), there has to be at least one relevant infection attempt to win the witness complexity game, (2) there is at most one relevant infection per phase, and (3) there is at most one phase with a relevant infection per round.

\begin{toappendix}
    
\begin{lemma}
If, at the end of the witness complexity game, the Prover wins, there must have been one relevant infection attempt for every edge in \(L \times R_2\).
\label{lem:number-relevant-attempts}
\end{lemma}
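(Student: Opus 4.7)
The plan is to proceed by contradiction. Suppose the Prover wins the witness complexity game on \(\G_n\) using a schedule \(S_1, \dots, S_a\), but for some edge \(e^\star = \ell_i r_{2j} \in L \times R_2\), no relevant infection attempt involving \(e^\star\) occurs in any of the \(a\) rounds. I would construct an alternative temporal graph \(\G_n'\) that agrees with \(\G_n\) on every edge except \(e^\star\), and then show that every round of the schedule produces identical infection logs under both labelings. Since \(\G_n \neq \G_n'\) would then both be consistent with the observations, the Prover cannot have uniquely determined the true labeling, contradicting the assumption that the Prover wins.

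To choose the alternative label \(t^\star \neq \edgeLabel{e^\star}\), I would restrict attention to phase \(i\), the interval of length \(\iphase = 4x+1\) containing \(\edgeLabel{e^\star}\). Only \(O(x)\) labels of edges incident to \(\ell_i\) or \(r_{2j}\) lie in this phase, so by counting a ``quiet'' slot \(t^\star\) not coinciding with any such label exists and lies within \([\iphase \cdot i, \iphase \cdot (i+1) - 1]\). With \(t^\star\) fixed, I would case-analyze, for each round, the states of \(\ell_i\) and \(r_{2j}\) at both \(\edgeLabel{e^\star}\) and \(t^\star\). Case (A): if at time \(\edgeLabel{e^\star}\) under \(\G_n\) exactly one endpoint of \(e^\star\) is infectious and the other susceptible, then \(e^\star\) fires, which is a relevant attempt of type (i), contradicting the hypothesis. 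Case (B): the symmetric configuration at \(t^\star\) under \(\G_n'\) would, I would argue, force the existence of a relevant attempt of type (ii) at time \(\edgeLabel{e^\star} - 1\) under the original labels, again a contradiction. Case (C): in all remaining configurations the endpoints are in matching states at both times, so \(e^\star\) fires in neither labeling and the chain continues identically.

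The main obstacle is propagating this pointwise agreement at \(e^\star\) through the rest of each round and then across phases. For this I would exploit the \(B\)-\(C\) gadget: by construction, the edges of \(B^2\), \(L \times B\), \(B \times R\), and \(C \times R\) collectively reinfect every node in \(L \cup R\) through paths avoiding \(e^\star\), and the labels on these paths lie strictly after the \(L \times R_2\) slots within phase \(i\). Thus, any minor discrepancy introduced in the middle of phase \(i\) is erased by the end of the phase, so the states at the boundary between phase \(i\) and phase \(i+1\) coincide under \(\G_n\) and \(\G_n'\), and the argument extends to all later phases by induction. Formalizing this gadget-induced decoupling, and verifying that the matching of states is robust against arbitrary seed choices of the schedule, will be the most delicate part of the proof.
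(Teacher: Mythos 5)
Your overall strategy---exhibit a second labeling consistent with every observed log, so that the Adversary can answer with it---is the right one and is also how the paper argues. But the paper's alternative label for \(e^\star\) is simply \(\edgeLabel{e^\star}-1\), and this choice is not incidental: the definition of a \emph{relevant} infection attempt is calibrated so that a relevant attempt is \emph{exactly} an observation that distinguishes label \(\edgeLabel{e^\star}\) from label \(\edgeLabel{e^\star}-1\) (a success at \(\edgeLabel{e^\star}\), or a visible non-event at \(\edgeLabel{e^\star}-1\) with one endpoint infectious and the other susceptible). With that choice the lemma is essentially immediate: if neither type of attempt ever occurred, then in every round the edge behaves identically under the two labels, both remain consistent, and the Adversary picks whichever one the Prover did not submit. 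No phase-propagation or gadget analysis is needed for this particular lemma.

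Your substitute choice of a ``quiet slot'' \(t^\star\) introduces a genuine gap, concentrated in Case (B). Choosing \(t^\star\) to avoid the \emph{labels} of edges incident to \(\ell_i\) and \(r_{2j}\) does not control the \emph{states} of \(\ell_i\) and \(r_{2j}\) at time \(t^\star\): seeds may be placed at arbitrary nodes and times, and nodes remain infectious for \(\iphase\) steps after infection, so there can be a round in which exactly one endpoint is infectious and the other susceptible at \(t^\star\). In that round \(e^\star\) fires under \(\G_n'\) but not under \(\G_n\), the logs differ, and \(\G_n'\) is \emph{not} consistent with the observations---yet no relevant attempt need have occurred, since relevance by definition concerns only the times \(\edgeLabel{e^\star}\) and \(\edgeLabel{e^\star}-1\). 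Your claim that this configuration ``forces a relevant attempt of type (ii) at \(\edgeLabel{e^\star}-1\)'' has no justification: the states at \(t^\star\) and at \(\edgeLabel{e^\star}-1\) are unrelated in general. To repair the argument you would have to quantify over all rounds and find a \(t^\star\) that is state-quiet in every round, which the schedule can be designed to prevent; the clean fix is to abandon the quiet slot and use \(\edgeLabel{e^\star}-1\), at which point the rest of your machinery (and the induction across phases) becomes unnecessary.
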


\begin{proof}
The result follows since if there is an edge in \(L \times R_2\) for which there was no relevant infection attempt, then both \(\edgeLabel{e}\) and \(\edgeLabel{e} - 1\) are possible labels consistent with the infection log.
Therefore, the Prover must lose the game, since \(\adv\) can always pick the label the Prover does not pick out of these two.
\end{proof}

\begin{lemma}
In each round, there is at most one phase with relevant infection attempts.
\label{lem:active-phases-per-round}
\end{lemma}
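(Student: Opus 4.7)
The plan is to show that the graph's gadget structure forces a cascade which, once triggered in some phase~$i^*$, infects every node of $L\cup R$ by the start of phase~$i^*+1$, after which no edge in $L\times R_2$ can have a susceptible endpoint. This makes every relevant attempt impossible in phases $i > i^*$.

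I would proceed as follows. Fix a round with seed set $S$ and the adversary's infection log. Let $i^*$ be the smallest phase index such that phase $i^*$ contains a relevant infection attempt (if no such index exists, the lemma is vacuous). Minimality of $i^*$ handles all phases $i<i^*$, so the bulk of the work is in showing that no relevant attempt occurs in any phase $i > i^*$. Both cases (i) and (ii) of a relevant attempt on an edge $\ell_i r_{2j}$ require at least one endpoint of the edge to be susceptible at a time within phase $i$. Thus it suffices to prove that, by time $(i^*+1)\iphase$, every node of $L\cup R$ has been infected at least once.

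To establish this, I would trace the forced propagation during phase $i^*$. A relevant attempt in phase $i^*$ implies that some node of $\{\ell_{i^*}\}\cup R_2$ has been infected at a time in $[i^*\iphase,(i^*+1)\iphase-1]$. Using the edge $\ell_{i^*} b_{i^*}$ with label $(i^*+1)\iphase$, or the edges $b_{i^*} r_{(p_{i^*})_j}$ with labels $i^*\iphase + 2j+2$ fed by the Hamiltonian-path propagation along $p_{i^*}$, I would argue that $b_{i^*}$ is infected no later than time $(i^*+1)\iphase$. Once $b_{i^*}$ is infected and infectious, the edges $\edgeLabel{b_{i^*},b_j}=(i^*+1)\iphase-2$ and $\edgeLabel{b_{i^*},c_{i^*}}=(i^*+1)\iphase-1$ (or their analogs initiated by the chain from an earlier $b_{i'}$ for $i'\le i^*$) fire, and then $\edgeLabel{c_{i^*},r_j}=(i^*+1)\iphase$ together with $\edgeLabel{\ell_j,b_{i^*}}=(i^*+1)\iphase$ infect every remaining node of $L\cup R$ at exactly time $(i^*+1)\iphase$. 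Combined with the observation that a seed occurring in a phase strictly later than $i^*$ would either contradict minimality of $i^*$ (if it creates a relevant attempt in an earlier phase) or act on an already-non-susceptible node, this gives the desired conclusion.

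The main obstacle I expect is the case analysis for Step where $b_{i^*}$ actually gets infected: the seed may be in $L$, $R$, $B$, or $C$, and at various times, so one must verify by inspecting the label formulas that each scenario still routes the infection through $b_{i^*}$ (or an equivalent triggering node) in time for its outgoing cascade edges at $(i^*+1)\iphase-2$, $(i^*+1)\iphase-1$, and $(i^*+1)\iphase$ to be active. A secondary subtlety is ruling out that a strictly later seed could independently cause a relevant attempt in a later phase despite the cascade; here one uses that all of $L\cup R$ are already infected or resistant from time $(i^*+1)\iphase$ onward, so such a seed produces no susceptible-endpoint configuration on any $L\times R_2$ edge.
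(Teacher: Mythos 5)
Your high-level idea is the same as the paper's: the $B$/$C$ gadget forces a cascade that infects all of $L\cup R$ by the end of a single distinguished phase, after which no $L\times R_2$ edge has a susceptible endpoint and hence no relevant attempt can occur. However, the step you flag as ``the main obstacle'' is not a technicality to be deferred --- it is essentially the entire content of the lemma, and the one concrete mechanism you offer for it does not work. You claim it suffices that $b_{i^*}$ is infected ``no later than time $(i^*+1)\iphase$,'' citing the edge $\ell_{i^*}b_{i^*}$ with label $(i^*+1)\iphase$. But a node infected at time $t$ is infectious only from $t+1$ onward, so a $b_{i^*}$ infected at $(i^*+1)\iphase$ cannot fire $\edgeLabel{b_{i^*},c_{i^*}}=(i^*+1)\iphase-1$ or $\edgeLabel{b_{i^*},b_j}=(i^*+1)\iphase-2$; the cascade you describe then does not complete within phase $i^*$, and your claim that all of $L\cup R$ is non-susceptible from $(i^*+1)\iphase$ onward fails in exactly this case. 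The paper resolves this by \emph{not} insisting on a single-phase cascade: it anchors on the phase $i$ of the \emph{first infection in the round} (not the first relevant attempt) and defines the active phase to be either $i$ or $i+1$, depending on whether the first infection arrives early enough to reach $b_i$ in time; in the late case one must separately verify that no relevant attempt occurs in phase $i$ at all, so that deferring the cascade to $b_{i+1}$ and phase $i+1$ is harmless. That two-scenario structure, together with the case analysis over the location of the first infected node ($L$, $R$, $B$, or $C$) and its timing, is what your sketch omits.

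A secondary issue with your anchor: taking $i^*$ to be the first phase containing a relevant attempt does not by itself tell you where the infection chain \emph{started}, and the cascade's arrival time at $b_{i^*}$ depends on that starting point (a seed in $C$ or $B$ in an earlier phase routes through entirely different edges than a seed at $\ell_{i^*}$ or in $R$). Tracing backward from the relevant attempt to the seed, and then forward to $b_{i^*}$, is precisely the case analysis again. So while your framing is a legitimate reorganization of the same argument, as written it contains a genuine gap: the lemma is not proved until each seed location/timing combination is shown to deliver some $b_{i'}$ its infection by time $(i'+1)\iphase-3$ (so that the $(i'+1)\iphase-2$ and $(i'+1)\iphase-1$ edges can fire), or else to produce no relevant attempts in the intervening phase.
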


The proof of this statement is fairly technical. We argue that all infection chains must essentially follow the pattern illustrated in \Cref{fig:construction-omega-m}.
Our construction contains gadgets to ensure this also happens in all edge cases, which we carefully consider in the proof to make this idea rigorous.

\begin{figure}[tb]
\centering
\includegraphics[width=4cm]{./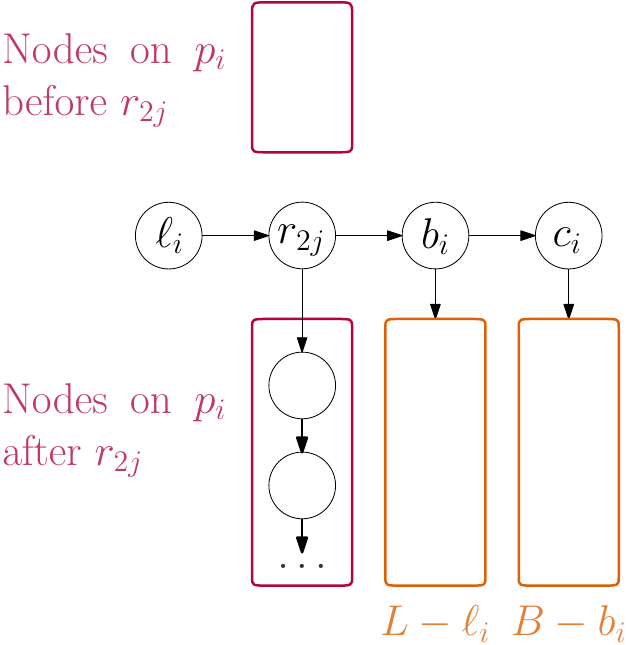}
\caption{\label{fig:construction-omega-m}As an illustrative example, consider the directed tree of infections (an edge \(u \to v\) means that \(v\) is infected via \(uv\)) that occurs when \(\ell_i\) is infected such that the relevant infection attempt is along \(\ell_ir_{2j}\). The other cases are similar. Colored arrows mean that all nodes in a set are infected via an edge from the origin node.\mm{there are no colored edges?}}
\end{figure}

To prove this lemma, we need to formalize the notion of an active phase, which, in essence, is the phase where the relevant infection attempts must take place.

\begin{definition}
Let \(v\) be the first node infected in a given round, let \(t\) be the infection time, and let \(i\) be the phase of that infection time.
Then one of two things must be the case:
\begin{enumerate}
\item At time step \((i+1)\iphase\) all nodes in \(L\) and \(R_2\) are infected or resistant, and thus no relevant infection attempt may take place after that time. In this case, we call \(i\) the active phase.
\item At time step \((i + 1) \iphase\) not all nodes in \(L\) and \(R_2\) are infected or resistant. In that case, call \(i+1\) the active phase.
\end{enumerate}
\end{definition}

\begin{proof}
In what follows, we prove that there is at most one active phase and that any two relevant infection attempts must occur within that active phase.
For technical reasons, the first infection in a round is either in the active phase or right before the start of the active phase.

Towards that end, we prove that \(b_i\) is infectious at time steps \((i+1)\iphase\) and \((i+1)\iphase-1\) or \(b_{i+1}\) is infectious at \((i+2)\iphase\) and \((i+2)\iphase -1\), but in the second case, there are no infection attempts before \(i\iphase + 4x + 1\) (and thus no relevant infection attempts).

Note that for all infection chain descriptions below, we can ignore the fact that, when we claim a node infects another, there may have been additional seed infections which lead to infecting one of the endpoints earlier. Since we know that there are no infections before \(i\iphase\) and are only interested in infections until \((i+1)\iphase\), earlier infections still leave the nodes infectious at the claimed time steps anyways, except in those cases where we explicitly argue about scenario two.

\uline{Case 1:} \(v = b_i\). If \(t \le i\iphase + 4x + 1\), \(b_i\) infectious at both \((i+1)\iphase\) and \((i+1)\iphase -1\). If not, case two clearly holds. There will be no relevant infections (because it’s too late) and \(b_i\) infects \(b_{i+1}\) at time step \((i+2)\iphase - 2\) which is then infectious at \((i+2)\iphase\) and \((i+2)\iphase -1\).

\uline{Case 2:} \(v \in B-b_i\). If \(t \le i\iphase + 4x + 1\), then \(v\) infects \(b_i\) at time step \((i+1)\iphase - 2\) and the statement holds. Otherwise, argue scenario two analogously to case 1.

\uline{Case 3:} \(v \in L\). If \(v=\ell_i\) and the infection occurs before \(i\iphase + 4x + 1\), then \(\ell_i\) infects some node in \(R_{2}\) within the next four time steps, which then infects \(b_i\) right after. In all other cases, \(v\) infects \(b_i\) at time step \((i+1)\iphase\) which then infects \(b_{i+1}\) at time step \((i+2) \iphase -2\) which then infects \(c_{i+1}\) at time step \((i+2)\iphase -1\), so that at time step \((i+2) \iphase\) all nodes in \(L\) and \(R_2\) become either infected or are already resistant or infected.

\uline{Case 4:} \(v \in R\). Assume \(v = r_{(p_i)_j}\). Then, if the seed infection occurs before \(i\iphase + 2j + 2\), \(b_i\) gets infected at that time via the edge \(r_jb_i\). If the infection occurs after \(i\iphase + 4j + 2\), we have to analyze a bit more carefully. Now, \(c_i\) becomes infected at time step \((i+1)\iphase\).  Then, \(c_i\) infects \(b_{i+1}\) at \((i+2)\iphase-2\), fulfilling our condition.

\uline{Case 5:} \(v \in C\). Let \(v = c_j\) with \(j \in [x]\). If the infection occurs before \((i+1)\iphase -1\), then \(b_i\) becomes infected at \((i+1)\iphase - 1\). If the infection occurs after \((i+1)\iphase -1\), then \(b_{i+1}\) becomes infected at \((i+2)\iphase -1\). If the infection is this late, clearly, no relevant infection can occur before \((i+1)\iphase\), that is the end of phase \(i\).

We have seen that if the seed infection(s) occur before \(i\iphase + 4x + 1\), \(b_i\) is infectious at time steps \((i+1)\iphase - 1\) and \((i+1)\iphase\). In this case we call \(i\) the first active phase. In the other case, \(b_{i+1}\) is infectious at \((i+2)\iphase - 1\) and \((i+2)\iphase\) and we call \(i+1\) the first active phase.

Let \(i'\) be the first active phase, then at time step \((i'+1)\iphase -1\) the node \(b_{i'}\) infects \(c_{i'}\) (or \(c_{i'}\) is already infectious). Then at time step \((i'+1)\iphase\), all nodes in \(L\) and \(R\) become infected (if they are not infected or resistant before). Thus, after that time step, no relevant infection attempt may occur anymore.
\end{proof}

\begin{lemma}
There is at most one relevant infection attempt per phase.
\label{lem:relevant-attempts-per-phase}
\end{lemma}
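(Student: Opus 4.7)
The plan is to fix a single round and a phase $i$, consider the chronologically first relevant attempt in phase $i$, and show that the cascade triggered by that attempt pre-empts every other would-be relevant attempt in the same phase. The central numerical observation is that $\ell_i$'s outgoing edges to $R_2$ carry labels $i\iphase + 4k + 1$ (stepping by $4$ along the path $p_i$), while the internal edges of the Hamiltonian path $p_i$ step by $2$; the infection therefore runs ahead along $p_i$ faster than $\ell_i$'s $R_2$-edges activate, so later $R_2$-endpoints are already infected by the time $\ell_i$'s edge to them would fire.

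More precisely, I would begin by invoking \Cref{lem:active-phases-per-round} to restrict attention to the case when $i$ is an active phase and the relevant window inside phase $i$ is contained in $[i\iphase + 5,\,(i+1)\iphase - 1]$; outside this window, the gadget argument of that lemma already implies no relevant attempt can occur. Let $k^\star$ be the smallest index such that there is a relevant attempt on $\ell_i r_{p(i,k^\star)}$ in this round, and split on whether the attempt at time $i\iphase + 4k^\star+1$ is successful. In the successful case, exactly one of $\ell_i, r_{p(i,k^\star)}$ was infectious just before that time; after it, $r_{p(i,k^\star)}$ is freshly infected and can propagate along $p_i$, and $\ell_i$ becomes infected either directly (if it was the infectious endpoint) or via the subsequent path / $b_i$ gadget. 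In the unsuccessful case (case (ii) of the definition), the infected endpoint must be $r_{p(i,k^\star)}$ — otherwise a successful infection at $\edgeLabel{\ell_i r_{p(i,k^\star)}}$ would be forced — and the same forward propagation along $p_i$ applies.

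I would then argue that for every $k > k^\star$, the node $r_{p(i,k)}$ is infected strictly before time $i\iphase + 4k + 1$: the path $p_i$ reaches $r_{p(i,k)}$ in at most $2(k-k^\star) + O(1)$ time units from $r_{p(i,k^\star)}$, while $\ell_i r_{p(i,k)}$ activates only $4(k-k^\star)$ units later, leaving a comfortable slack. Simultaneously, once $\ell_i$ is infectious, it stays infectious throughout the rest of phase $i$ by the choice $\iphase = 4x+1$. Thus at each time $i\iphase + 4k + 1$ with $k > k^\star$, both endpoints of $\ell_i r_{p(i,k)}$ are already infected/resistant, so no such attempt meets the ``exactly one endpoint infected'' condition of relevance, and any success at the edge's label is trivial rather than relevant. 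Symmetric reasoning rules out any relevant attempt strictly before $k^\star$ by minimality.

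The main obstacle I anticipate is the case analysis over the locus of the first infection injected into phase $i$ in this round — it may enter through $R \setminus R_2$, through a $b_j$ or $c_j$ node, or through $\ell_i$ itself, and each entry point leads to a slightly different schedule for when $\ell_i$ becomes infectious relative to the earliest $R_2$-endpoint. The bookkeeping parallels the case split in the proof of \Cref{lem:active-phases-per-round}, but at the finer granularity of individual $\ell_i r_{2j}$ edges; in particular, one must check that entries via $B$ or $C$ either infect $\ell_i$ only late enough that the induced Hamiltonian-path cascade has already saturated the $R_2$-endpoints it would reach, or infect $\ell_i$ together with $r_{p(i,k^\star)}$ so that only the single ``seam'' $k^\star$ produces a relevant attempt. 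Establishing this timing slack for every entry point, using the precise offsets built into the labels of $b_i r_{(p_i)_j}$, $b_i b_j$, $b_i c_j$ and $c_i r_j$, is the computational heart of the proof.
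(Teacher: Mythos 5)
Your proposal is correct and follows essentially the same route as the paper's proof: restrict to the (unique) active phase via \cref{lem:active-phases-per-round}, observe that only $\ell_i$'s edges to $R_2$ can host a relevant attempt there, order those edges by position on $p_i$, and use the fact that the Hamiltonian-path labels advance in steps of $2$ while $\ell_i$'s $R_2$-labels advance in steps of $4$ so that the cascade triggered by the first relevant attempt infects every later $R_2$-endpoint before its $\ell_i$-edge activates, with the same case split on whether that first attempt is successful or of type (ii). The paper phrases this as a contradiction between two relevant attempts $e$ and $e'$ rather than via a minimal index $k^\star$, but the content is identical.
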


\begin{proof}
Let \(i\) be the first active phase.
Then any relevant infection attempt must be made in connection with \(\ell_{i}\) since the time steps related to relevant infection attempts for other nodes in \(L\) are all either in earlier (inactive) phases or in later phase when, as proven above, no relevant infection attempt may take place.

Now assume there are two relevant infection attempts along two different edges \(e\) and \(e'\).
Write \(e = \ell_i r_{2j}\) and \(e' = \ell_i r_{2j'}\) and assume without loss of generality that \(\edgeLabel{e} < \edgeLabel{e'}\).

Observe that since \(\edgeLabel{e} < \edgeLabel{e'}\), it follows that \(r_{2j'}\) must be further back on \(p_i\).
If the infection attempt along \(e\) is relevant because it’s successful, then \(r_{2j'}\) is infected at the latest via \(p_i\) one time steps before a relevant infection attempt can occur via \(e'\).

Now assume that there is a relevant infection attempt along \(e\) but there is no successful infection via \(e\).
This could only happen if one of two things were the case: (1) \(\ell_i\) is infectious at \(\edgeLabel{e} -1\), but not at \(\edgeLabel{e}\) or (2) \(r_{2j}\) is already infected or resistant at \(\edgeLabel{e}\).

(1) cannot happen because then \(\ell_i\) must have been infectious before \((i-1)\iphase + 4x + 1\), which contradicts our assumption.
Regarding, (2) if \(r_{2j}\) is infectious long enough to infect their next node on \(p_i\), the argument proceeds as above.
If \(r_{2j}\) is not infectious anymore by that time, it must have been infectious before \((i-1)\iphase + 4x + 1\), again contradicting our assumption.
\end{proof}

\begin{proof}[Proof of \cref{thm:omega-m-witness-complexity}]
Putting together \cref{lem:number-relevant-attempts,lem:active-phases-per-round,lem:relevant-attempts-per-phase}, we get the desired bound.
\end{proof}
\end{toappendix}

\section{Extending the Model} \label{sec:extending}
So far, we examined a simple version of the TGD game with restricted assumptions about infection behavior and the knowledge of \(\ddv\). These might seem restrictive and less close to the real-world processes. In this section, we lift these restrictions and show that either the theoretical behavior remains unchanged, or the problem becomes trivial, offering no significant improvement over brute-force. 
See \Cref{tbl:variations-overview} for an overview of the resulting lower bounds and algorithms. 


\ifpaper \graphDiscoveryResultsTable \fi

\begin{toappendix}
\subsection{Infection Times Only–Feedback \label{sec:infection-time-feedback}}
\label{sec:orgc0595d3}
First, let us look at a variation of the TGD game where the player only gets feedback on when a node is infected, but not by which other node.
Call this \emph{infection times only–feedback}.
This makes \(\ddv\)'s job harder.
In some cases (e.g.,~when a node only has one neighbor), \(\ddv\) can still deduce who infected a node, but generally that is not the case anymore.
To see this, consider a case where a node becomes infected at some time when it has two infectious neighbors. Then, one of the edges must have the label of the infection time, but \(\ddv\) cannot directly deduce which of the edges, as it could in our basic model.
Since, up until now, we have looked at an easier (from \(\ddv\)'s perspective) version of the TGD game, lower bounds directly translate.
We will see this pattern for the other variations as well, though we will not state it with this level of formality from now on.

\begin{theorem}
Let \((G_n, (\Tmax)_n, \iphase_n, k_n)_{n \in \N}\) be a family of instances of the TGD game under the basic model and \(f: \N^5 \to \N\) such that any \(\ddv\) must use at least
\(\Omega(f(\nodesetsize{G_n}, \edgesetsize{G_n}, (\Tmax)_n, \iphase_n, k_n))\)
rounds, then the same lower bound holds if the game is played under the infection times only–variation.
\end{theorem}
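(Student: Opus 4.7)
The plan is to argue by a simple monotonicity (or ``simulation'') argument: the infection-times-only variant gives $\ddv$ strictly less information than the basic model, so any winning strategy for the weaker variant induces a winning strategy for the basic model with the same number of rounds. Consequently, the graph discovery complexity in the weaker model is at least as large as in the basic model, and any lower bound for the basic model immediately transfers.

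More concretely, I would fix an instance $(G_n, (\Tmax)_n, \iphase_n, k_n)$ and let $r$ denote the graph discovery complexity of this instance under the infection-times-only variant. I would show $r$ is at least the graph discovery complexity under the basic model, which by assumption is $\Omega(f(\nodesetsize{G_n}, \edgesetsize{G_n}, (\Tmax)_n, \iphase_n, k_n))$. To do this, suppose $\sigma$ is a $\ddv$-strategy that wins the weaker-variant game in $r$ rounds against every adversary. I would define a $\ddv$-strategy $\sigma'$ for the basic model as follows: in each round, $\sigma'$ plays exactly the seed infections that $\sigma$ would play given the history so far; whenever $\sigma'$ receives an infection log $L$ from the basic-model adversary, it projects $L$ to its associated infection timetable $T = \{(v,t) \mid (u,v,t) \in L\}$ and feeds $T$ to $\sigma$; finally $\sigma'$ submits the same graph guess as $\sigma$.

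The key verification is that $\sigma'$ wins the basic-model game in at most $r$ rounds against every basic-model adversary $\adv$. Given any basic-model adversary $\adv$, consider the weak-model adversary $\adv'$ that internally simulates $\adv$ and, in each round, returns the projected timetable of $\adv$'s log. By \Cref{lem:infection-log-to-time-table}, this projection is well-defined and is exactly the infection timetable consistent with the round's seeds and with whichever temporal graph $\adv$ is committed to. Hence $\adv'$ is a legal weak-model adversary, and $\sigma$ playing against $\adv'$ produces exactly the same transcript of seed infections and final guess as $\sigma'$ playing against $\adv$. Since $\sigma$ wins in $r$ rounds against every weak-model adversary, $\sigma'$ wins in $r$ rounds against every basic-model adversary, giving the required comparison of complexities.

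The argument is essentially definitional and there is no real obstacle: the only point requiring care is that projecting a consistent infection log to an infection timetable yields a valid weak-model response, for which \Cref{lem:infection-log-to-time-table} is exactly the tool we need. The statement of the theorem for general $f$ then follows by applying the complexity comparison pointwise to each instance in the family and taking $\Omega(\cdot)$ on both sides.
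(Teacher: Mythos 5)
Your proposal is correct and takes essentially the same approach as the paper: the paper's proof is a two-sentence observation that the Discoverer receives strictly less information in the infection-times-only variant, so any strategy winning there also wins the basic game, and you have simply spelled out the underlying simulation (projecting logs to timetables via \Cref{lem:infection-log-to-time-table}) in full detail. No gap; your version is just a more explicit rendering of the paper's argument.
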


\begin{proof}
The result follows since \(\ddv\) gets strictly less information, and the rest of the process is exactly the same.
Thus, every algorithm winning the game under the infection times only–variation must also win the game under the basic model.
\end{proof}

Crucially, applying this to \Cref{thm:temporally-connected-graph-discovery-lower-bound} and \Cref{thm:omega-m-witness-complexity} yields the following two results.

\begin{corollary}
For all \(\iphase, k \in \N^+\) and \(\Tmax \ge 4\), there is an infinite family of temporal graphs \(\{\G_n\}_{n \in \N}\) such that the graph \(\G_n\) has \(\Theta(n)\) nodes and the minimum number of rounds required to satisfy the TGD game under the infection times only–variation grows in \(\Omega(n (T_\mathrm{max} - 3) / (\iphase k))\). Also, all graphs in the family are temporally connected.
\label{thm:temporally-connected-graph-discovery-lower-bound-infection-times-only}
\end{corollary}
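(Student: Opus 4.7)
The plan is to invoke the general translation principle stated just above the corollary, applying it to the lower bound of \Cref{thm:temporally-connected-graph-discovery-lower-bound}. Since the infection-times-only variation reveals strictly less information to $\ddv$ than the basic model (one gets only the set $T$ of (node, time) pairs, not the full log $L$ identifying which neighbor caused each infection), any strategy that wins the TGD game under infection-times-only feedback in $r$ rounds also wins it under basic feedback in $r$ rounds, simply by discarding the extra ``who infected whom'' information. Hence every lower bound for the basic model immediately carries over.

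Concretely, I would first state that the same family $\{\G_n\}_{n\in\N}$ constructed in the proof of \Cref{thm:temporally-connected-graph-discovery-lower-bound} serves as the witness. Then I would verify two things: (i) the family has $\Theta(n)$ vertices and is temporally connected, both of which are established already in the construction for \Cref{thm:temporally-connected-graph-discovery-lower-bound}; and (ii) the lower bound argument still goes through under the weakened feedback. For (ii) I would rerun the potential argument from \Cref{thm:potential-argument}: every consistent infection log induces a unique consistent infection timetable by \Cref{lem:infection-log-to-time-table}, so $\adv$'s response strategy from the original proof (reply ``infection failed'' as long as at least one label remains consistent with the timetable) is well-defined in the weaker model too. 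The per-round decrease of the potential is still bounded by $\iphase k$, because this bound only counted the number of relevant labels that can be excluded per seed per round, not which endpoint caused which infection.

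Since the construction, the potential function, and the per-round decrease all transfer verbatim, the same arithmetic yields
\[
\left\lfloor \frac{n(\Tmax-3)}{2\iphase k}\right\rfloor
\]
as a lower bound, which is $\Omega\!\bigl(n(\Tmax-3)/(\iphase k)\bigr)$.

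I do not expect a real obstacle here: the corollary is essentially a ``free'' consequence of the information-theoretic fact that coarser feedback cannot help $\ddv$. The only thing that would require a second look is confirming that the $\adv$ strategy in the base-case proof does not secretly rely on being able to dictate \emph{who} infected whom (as opposed to only \emph{when} each node was infected). A brief inspection shows it does not: the strategy only ever needs to ensure that at least one consistent labeling remains, and consistency is a property of the timetable alone via \Cref{lem:infection-log-to-time-table}.
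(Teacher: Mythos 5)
Your proposal matches the paper's own argument: the corollary is obtained exactly by applying the general translation theorem (coarser feedback cannot help \(\ddv\), so basic-model lower bounds carry over) to \cref{thm:temporally-connected-graph-discovery-lower-bound}, with the same witness family. Your additional check that the adversary's strategy depends only on the timetable (via \cref{lem:infection-log-to-time-table}) is a sound but unnecessary elaboration of the same reduction.
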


\begin{corollary}
There is an infinite family of instances \(\{\G_n\}_{n \in \N}\) such that the witness complexity  under the infection times only–variation grows in \(\Omega_k(\edgesetsize{\G_n})\).
\end{corollary}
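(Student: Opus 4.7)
The plan is to invoke the general translation principle from \Cref{sec:infection-time-feedback}: any lower bound on the basic TGD game carries over to the infection times only-variation. At the witness-complexity level this lets us reuse the family from \Cref{thm:omega-m-witness-complexity} with no new construction.

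First I would formulate the witness-complexity analogue of the translation theorem: any sequence of seed-infection sets that is a witnessing schedule under infection times only-feedback is automatically one under the basic model. This is immediate from \Cref{lem:infection-log-to-time-table}, since the infection timetable is a deterministic function of the infection log; hence any labeling consistent with a collection of infection logs is consistent with the corresponding timetables, and the set of labelings consistent with timetables (for a fixed schedule) contains the set consistent with the logs. In terms of the potential \(\Phi\) from \Cref{thm:potential-argument}, \(\Phi\) can only be larger under the weaker feedback, so the number of rounds required to drive \(\Phi\) down to \(\edgesetsize{\G}\) can only increase.

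Given this monotonicity, I would simply take \(\{\G_n\}_{n \in \N}\) from \Cref{thm:omega-m-witness-complexity}, whose basic-model witness complexity is \(\Omega_k(\edgesetsize{\G_n})\), and conclude that the same lower bound holds in the infection times only-variation. The entire combinatorial heart of the argument (the phase structure, the \(L \times R_2\) edges, and the \(B, C\) gadgets enforcing a unique active phase per round) is inherited unchanged. The only point requiring care is confirming that the witness-complexity game itself still makes sense under the weaker feedback, which it does because the game is phrased purely in terms of uniquely determining edge labels from observations; no new case analysis is needed, making this essentially a one-line deduction from the translation principle and \Cref{thm:omega-m-witness-complexity}.
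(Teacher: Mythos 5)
Your proposal is correct and follows essentially the same route as the paper, which obtains this corollary by combining the translation principle (lower bounds for the basic model carry over because infection-times-only feedback gives the Discoverer strictly less information) with the family from \cref{thm:omega-m-witness-complexity}. Your extra observation---that a witnessing schedule under timetable feedback is automatically one under log feedback since the timetable is a deterministic function of the log---is exactly the right justification for why the translation applies at the witness-complexity level.
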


The trivial algorithm still works with infection times only-feedback. What is not quite as obvious is that the \(\FDMain\) TGD algorithm also translates, giving us the following result.

\begin{appendixtheoremrep}
\Cref{alg:follow-discover} wins any instance of the TGD game under the infection times only–variation on a  graph \(\G\) in at most \(6\edgesetsize{\G} + \decc{\G} \left\lceil \Tmax/\iphase \right \rceil\) rounds.
\end{appendixtheoremrep}

\begin{appendixproof}
First, observe that the algorithm does not require the source of an infection to be known. Calling the \(\ex\) subroutine simply needs the fact that a node was infected and at what time.

Secondly, observe that at the end, for each edge \(e = \{u,v\}\) there has been a seed infection at \(u\) and \(v\) at \(\edgeLabel{e} - 1\), thus the infection must have been successful and taken place in the first possible time step (after the seed infection).
As any infection chain not along the single edge between \(u\) and \(v\) takes at least two time steps, we can soundly conclude the label of the edge from the infection logs.

Together, these two properties mean that the \(\FDMain\) algorithm translates to the infection times only variation.
\end{appendixproof}

\subsection{Unknown Static Graph \label{sec:unknown-static-graph}}
\label{sec:org677ea02}
It is a fair criticism that it is not always realistic to assume \(\ddv\) knows the static graph at the start of the game.
This motivates us to investigate a version of the game where only the node set, but not the edges, are known to \(\ddv\) at the start of the game.
Here too, the lower bounds from the basic model translate as \(\ddv\) gets strictly less information, but we can also achieve new, stricter bounds, which show that not only does our \(\FDMain\) not translate, no comparably fast algorithm can even exist.

\begin{theorem}
Consider the variation of the TGD game where \(\ddv\) does not learn the static graph. Let \(n, T_\mathrm{max} \in \N^+\) and \(k \in [n], m \in [{n \choose 2}-n], \iphase \in [T_\mathrm{max}]\). Then there is an \(\adv\) such that any algorithm winning this game variation on graphs with \(n\) nodes must take at least \(\lfloor nT_\mathrm{max}/(2\iphase k) \rfloor\) rounds. This \(\adv\) picks a graph with at most two \dECCs{}.
\label{thm:unknown-static-graph-lb}
\end{theorem}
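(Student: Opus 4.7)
The approach is to reuse the potential argument of \Cref{thm:potential-argument}, but now exploit $\ddv$'s ignorance of the static edge set. The key observation is that each seed infection at $(u,t)$ only observes events in a time window of length $\iphase$ at the seeded node, so across $k$ seeds per round, at most $O(\iphase k)$ candidate-label slots can be eliminated. I would construct a family of temporal graphs on $n$ nodes containing $\Omega(nT_{\max})$ such slots, all realizable with at most two \dECCs{}.

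For the construction, I fix a vertex set $V$ of size $n$ and equip $\adv$ with a backbone of fixed-label edges whose labels are revealed to $\ddv$ at the start of the game. The rest of the graph consists of ``mystery'' edges, one per pair $(u_i, v_i)$ in some partition of $V$, whose labels $\adv$ reveals only lazily. The backbone is chosen so that every possible combination of mystery labels yields at most two \dECCs{}; for instance, take two temporal paths whose labels densely tile $[1, T_{\max}/2]$ and $[T_{\max}/2+1, T_{\max}]$ at gaps of at most $\iphase$, with each mystery-edge endpoint attached so that its candidate labels sit within $\iphase$ of some backbone label. Let $\Lambda_i \subseteq [T_{\max}]$ be the labels still consistent with past logs for the $i$-th mystery edge, and set $\Phi \coloneqq \sum_i |\Lambda_i|$. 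Initially $\Phi \ge (n/2) T_{\max}$, and $\ddv$ only wins once $\Phi = n/2$ (every label pinned).

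The adversary's response policy is greedy: for any seed set of size at most $k$, $\adv$ responds with the infection log having the fewest successful infections still consistent with past answers, shrinking the $\Lambda_i$'s as little as possible. A single seed at $(u,t)$ affects only the one $\Lambda_i$ whose mystery edge is incident to $u$, and eliminates at most $\iphase$ candidate labels (those in $[t+1, t+\iphase]$), so a round reduces $\Phi$ by at most $\iphase k$. Dividing the required total drop of $nT_{\max}/2 - n/2 \ge nT_{\max}/2$ (for $T_{\max} \ge 2$) by the per-round bound then yields at least $\lfloor nT_{\max}/(2\iphase k) \rfloor$ rounds, as claimed.

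The main obstacle is controlling chain propagation: an infection that traverses the backbone could cross a mystery edge and reach its other endpoint, potentially revealing additional labels for free. I would handle this by making the non-backbone endpoint of each mystery edge a degree-one leaf (apart from the mystery edge itself), so propagation dies immediately after a single mystery hop, and by spacing mystery labels enough that any two incident mystery edges at a shared endpoint have labels far apart. A secondary subtlety is verifying that every label vector chosen by $\adv$'s lazy strategy really does collapse the $\iphase$-edge-connectivity into at most two components; this reduces to checking the density of backbone labels in each of the two time halves, which can be set parametrically in $n$, $T_{\max}$, and $\iphase$ to absorb every candidate mystery label.
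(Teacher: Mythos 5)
Your high-level plan (lazy adversary plus a potential argument charging at most \(\iphase k\) eliminated label-slots per round) is the same as the paper's, but your construction differs in a way that breaks the central step. The paper fixes \(m-1\) edges and gives them \emph{all the same label} \(1\); since temporal paths require strictly increasing labels, no infection can ever propagate more than one hop, so a round with \(k\) seeds really does make only \(\O{k}\) nodes infectious for \(\iphase\) time steps each. The single remaining edge is left floating in both \emph{position and} label, which is what yields at most two \dECCs{} (all label-\(1\) edges form one, the floating edge possibly a second) while still leaving \(\Omega(n\Tmax)\) candidate (pair, time) slots to rule out. Your backbone, by contrast, has labels ``densely tiling'' \([1,\Tmax]\) at gaps \(\le \iphase\): such a temporally increasing structure is exactly a conduit for long infection cascades. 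A single seed at the start of one of your backbone paths makes every node along it infectious in succession, probing \(\iphase\) candidate labels of \emph{every} mystery edge it touches in one round, so \(\Phi\) can drop by \(\Theta(n\iphase)\) per seed rather than \(\iphase\). Your leaf-endpoint fix only stops propagation \emph{across} a mystery edge, not propagation \emph{along} the backbone to many mystery-edge endpoints, and the claim ``a single seed at \((u,t)\) affects only the one \(\Lambda_i\) incident to \(u\)'' is false for the backbone you describe.

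There is a second, related gap in the two-component claim. Because the other endpoint of each mystery edge is a degree-one leaf, a mystery edge with label \(\ell\) joins the backbone's \dECC{} only if its backbone endpoint has an incident backbone edge with label within \(\iphase\) of \(\ell\). Since \(\ell\) ranges over essentially all of \([\Tmax]\), each mystery endpoint needs \(\Omega(\Tmax/\iphase)\) incident backbone edges with spread-out labels; a temporal path contributes only two labels per node, so ``at most two \dECCs{}'' fails for the construction as written. Making the backbone dense enough to fix this is precisely what re-enables the cascades above, so the two requirements are in direct tension in your design. (Your idea of using a perfect matching of \(n/2\) unknown edges to get the initial potential \((n/2)\Tmax\) is itself reasonable --- it is essentially the refinement a remark in the paper suggests for the vertex-cover step --- but to make it work you would need to neutralize propagation, e.g.\ by giving all auxiliary edges a single common label as the paper does, and then separately account for how the mystery edges end up in at most two \dECCs{}. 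You also never arrange for the graph to have the prescribed number \(m\) of edges.)
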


Note that this result holds regardless of the number of edges in the graph and for just two \dECCs{}.
Therefore, we cannot hope for an algorithm only dependent on the number of edges and \dECCs{} (as we have in the basic model).
In the basic model, nodes without edges are the best case, as we can simply ignore them.
Here, the opposite is true, since we perform tests to ensure the non-existence of the edge at every time step.

\begin{proof}
As the underlying graph, we fix all \(m\) edges but one, which we pick dynamically.
Choose the \(m -1\) fixed edges arbitrarily such that every node has at most \(n-2\) adjacent edges and that their subgraph is connected.
This is possible by a simple greed strategy starting at a single node \(v\), connecting it to arbitrary nodes that have less than \(n-2\) neighbors until \(v\) has  \(n-2\) neighbors, and then moving on to one of these neighbors to do the same.
Repeat this process until \(m-1\) edges have been added.
Give those edges the time label 1.
For these, \(\adv\) responds to infection attempts by simply simulating the correct behavior (i.e.,~an attempt is successful precisely at time step 1).

For all other possible edges, respond to all infection attempts with ‘infection failed’ as long as after that response there is a still are \(u,v \in \nodeset{G}\) and \(t \in [\Tmax]\) such that
\begin{itemize}
\item \(uv\) is not one of the edges fixed before, and
\item there has been no unsuccessful infection attempt from \(u\) to \(v\) or \(v\) to \(u\) at \(t\).
\end{itemize}
Otherwise, respond with `infection successful`.
Clearly, no \(\ddv\) can terminate and win until one round before that has happened, since there are still multiple consistent options for the unfixed edge remaining.

A similar argument to \Cref{thm:temporally-connected-graph-discovery-lower-bound} shows that, since no infection can ever spread, the online algorithm essentially has to search through all nodes \todo{technically you just need through a vertex cover of the non-edges right? And I think by your construction that can actually be quite small? You kinda want a perfect matching with non-edges right?} at all time steps, working on at most \(k\) nodes per round and covering at most \(\iphase\) time steps per node.
This yields the \(\lfloor n \Tmax / (2\iphase k) \rfloor\) bound.
\end{proof}

Therefore, while the goal of this variation seems natural, it makes the problem difficult to efficiently solve.
In particular, exploiting the structure of the static graph is hard, as both edges and non-edges need to be proven.
Naturally, this means that any improvement over the brute-force method is limited.
On the positive side, the brute force method still works.

\begin{appendixtheoremrep}
There is an algorithm that wins the TGD game in the unknown graphs variation in \(\nodesetsize{G}\Tmax\) rounds. \todo{Again I would state the algorithm in the theorem statement}
\end{appendixtheoremrep}

\begin{appendixproof}
As for \Cref{thm:brute-force}, the algorithm that performs the seed sets \(\{\{(v,t - 1)\} \mid v \in \nodeset{G}, t \in [0, \Tmax-1]\}\) trivially wins the game.
\end{appendixproof}

While, in that model \todo{which model?}, we were able to find a better algorithm for sparse graphs (in particular the \(\FDMain\) algorithm), we have no such hope here by \Cref{thm:unknown-static-graph-lb}.
In fact, these results show that this naive algorithm is close to optimal.

\subsection{Multilabels \& Multiedges \label{sec:multigraphs}}
\label{sec:org8b2260a}
The last extension we investigate is what happens if an edge between the same two nodes may exist at multiple time steps.
There are two ways to model this: we could allow each edge to have a set of labels instead of just one (we call these \emph{multilabels}) or we could allow multiple distinct edges (with different labels) to exist between the same two nodes (we call these multiedges).
Note that, while for most problems in the literature these notions are equivalent, that is not the case here.
With multiedges, \(\ddv\) learns the multiplicity of every edge, but with multilabels it does not.
If we only tell \(\ddv\) where an edge is, but not how many time labels it has, then we run into the same issue as with unknown static graphs in \Cref{sec:unknown-static-graph}.
In essence, any \(\ddv\) would be forced to check all combinations of nodes and seed times, only allowing for the trivial algorithm.
We first formalize this notion by giving the appropriate lower bound for multilabels before giving more positive results for multiedges.

\begin{appendixtheoremrep}
Consider the variation of the TGD game where a single edge might have an arbitrary subset of \([\Tmax]\) as labels. Let \(n, T_\mathrm{max} \in \N\) and \(k \in [n], m \in [{n \choose 2}-n], \iphase \in [T_\mathrm{max}]\). Then there is an \(\adv\) such that any algorithm winning this game variation on graphs with \(n\) nodes must take at least \(\lfloor \min\{n/2, m\}\Tmax/(\iphase k) \rfloor\) rounds. This \(\adv\) picks a graph with at most two \dECCs{}.
\label{thm:multilabels-lb}
\end{appendixtheoremrep}

Again, this lower bound tells us we may not hope for an algorithm taking advantage of a small number of \dECCs{}.
Similarly, any algorithm can only take advantage of the knowledge of the static edges when there are few of them (precisely if the number of edges is sublinear in the number of nodes).

\begin{proofsketch}
 The proof is analogous to the one of \Cref{thm:unknown-static-graph-lb} with three minor modifications: (1) we must fix all edges and let \(\ddv\) search for possibly existing labels instead of edges, (2) we must assure that any node has few adjacent edges, and (3) the analysis of the potential now analyzes the search for the multilabels and respects this smallest vertex cover.
We need to bound the number of edges adjacent to any one node to ensure that we cannot check all the edges using a small number of nodes to perform seed infections at.
Observe that this condition implies that the smallest vertex cover is large (i.e.,~we need many nodes to cover all edges).
\end{proofsketch}

\begin{appendixproof}
The proof is analogous to the one of \Cref{thm:unknown-static-graph-lb} with three minor modifications.
For ease of reading, we repeat the three modifications as outlined in the proof sketch:
(1) we must fix all edges and let \(\ddv\) search for possibly existing labels instead of edges, (2) we must assure that any node has few adjacent edges, and (3) the analysis of the potential now analyzes the search for the multilabels and respects this smallest vertex cover.
We need to bound the number of edges adjacent to any one node to ensure that we cannot check all the edges using a small number of nodes to perform seed infections at.
Observe that this condition implies that the smallest vertex cover is large (i.e.,~we need many nodes to cover all edges).

To address (1) and (2), construct the edges of the graph greedily by starting at an arbitrary node and connecting it to a different arbitrary node.
Now, until you have added \(m\) edges, repeatedly consider the node that has the smallest positive number of adjacent edges and add an edge to the node with the smallest number of adjacent edges (0 if possible).
Clearly, in the resulting graph, all nodes that have at least one edge are in the same connected component, and thus, after labeling them \(1\),  all these edges are in the same \dECC{}.
There will be at most one additional label assigned by \(\adv\), again yielding at most two \dECCs{} in total.
Also, any node has degree at most \(\lceil m /n \rceil\), thus any vertex cover must have at least \(\lceil m / (\lceil m / n \rceil) \rceil\) nodes.
Note that this implies that if \(m \ge n\), the size of the minimum vertex cover is at least \(\lfloor n/2 \rfloor\).

Regarding (2), observe that, initially, there are \(m \Tmax\) possible edge labels (call that the initial potential).
In this variation, a successful infection only tells \(\ddv\) that the edge has that label but not that it does not have other labels, as is the case in the basic game.
Also, by the construction of our graph, the smallest vertex cover is large, so any seed infection can only test a small number of edges.
Concretely, any node has at most \(\lceil m / n \rceil\) adjacent edges.
Therefore, any round can decrease the potential by at most \(\lceil m / n \rceil \iphase k\).
Dividing \(m \Tmax\) by \(\lceil m / n \rceil \iphase k\) yields the desired bound.
\end{appendixproof}

The situation looks much better if we consider temporal multiedges (i.e.,~temporal multigraphs where there can be multiple distinct edges between the same two nodes) instead.
Here, the core advantage we have is that we know the multiplicity of the edges.
In the proof of \Cref{thm:multilabels-lb} we saw that the crucial property that forces \(\ddv\) to spend so many rounds is that discovering a label on an edge does not preclude it from having to check all other possible time steps for more labels.
Note that we disallow multiple edges with the same endpoints and the same label.

First, notice that any temporal graph is also a temporal multigraph.
Therefore, we have the following lower bound as a corollary to \Cref{thm:temporally-connected-graph-discovery-lower-bound}.

\begin{corollary}
Consider the variation of the TGD game with multigraphs. For all \(\iphase, k \in \N^+\) and \(\Tmax \ge 4\), there is an infinite family of temporal graphs \(\{\G_n\}_{n \in \N}\) such that the graph \(G_n\) has \(\Theta(n)\) nodes and the minimum number of rounds required to satisfy the TGD game grows in \(\Omega(n (T_\mathrm{max} - 3) / (\iphase k))\). Also, all graphs in the family are temporally connected.
\label{cor:multigraph-lower-bound}
\end{corollary}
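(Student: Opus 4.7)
The plan is to observe that the family of simple temporal graphs $\{\G_n\}_{n \in \N}$ constructed in the proof of \Cref{thm:temporally-connected-graph-discovery-lower-bound} is already an instance of the multigraph variation of the TGD game, simply by viewing every edge as having multiplicity one. Since both $\iphase$, $k$, and $\Tmax \ge 4$ are arbitrary, the same construction is available in the multigraph setting, and the graphs remain temporally connected by the same argument.

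Next, I would argue that a Discoverer strategy winning the multigraph variant on $\G_n$ in $r$ rounds induces a winning strategy in the basic model on the same graph in $r$ rounds. The Discoverer in the multigraph variant learns the static multigraph in Step~1, which in our case coincides exactly with the static graph together with the (trivial) multiplicity-one information; hence the initial information is the same as in the basic model. The per-round interaction (seed infections and consistent infection logs) and the termination condition (producing a consistent labeling) are defined identically on simple temporal graphs regardless of whether we interpret them as multigraphs. Consequently, the $\adv$ strategy used in the proof of \Cref{thm:temporally-connected-graph-discovery-lower-bound}, which responds to each infection attempt along a relevant edge with a consistent failure until only one label remains, is a valid $\adv$ strategy in the multigraph variant on $\G_n$ as well.

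Combining these two observations, the potential argument from \Cref{thm:potential-argument} applied in the proof of \Cref{thm:temporally-connected-graph-discovery-lower-bound} carries over verbatim: the initial potential on the $\Omega(n)$ relevant edges is $\Omega(n \Tmax)$, each round decreases it by at most $\iphase k$, and therefore any winning Discoverer must use $\Omega(n(\Tmax-3)/(\iphase k))$ rounds. I do not anticipate a real obstacle here, since the argument is a direct inclusion of the basic model into the multigraph model; the only care needed is to check that $\adv$'s responses on single-label edges remain consistent with the multigraph semantics (where an edge could in principle carry several labels), which holds because $\adv$ is always free to commit to a labeling in which each edge has exactly one label.
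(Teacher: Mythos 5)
Your proposal is correct and matches the paper's own (one-line) justification: the paper simply observes that every simple temporal graph is a temporal multigraph, so the family and adversary strategy from \Cref{thm:temporally-connected-graph-discovery-lower-bound} carry over unchanged. Your additional care about the Discoverer receiving the (trivial) multiplicity information and the adversary committing to a single-label-per-edge labeling is a reasonable elaboration of the same argument, not a different route.
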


On a more positive note, the \(\FDMain\) algorithm (\Cref{alg:follow-discover}) translates as well.

\begin{appendixtheoremrep}
Consider the variation of the TGD game with multigraphs.
\Cref{alg:follow-discover} wins any instance of the TGD game on a graph \(\G\) in at most \(6\edgesetsize{\G} + \decc{\G} \left\lceil \Tmax/\iphase \right \rceil\) rounds.
\end{appendixtheoremrep}

Note that here, we count every multiedge individually.
In essence, this means that we only pay additional rounds for the additional multiplicity of the edges.

\begin{appendixproof}
This result follows since the proofs of \Cref{thm:follow-discover}, \Cref{lem:follow-edge} and thus \Cref{cor:follow-whole-component} hold analogously on multigraphs.
\end{appendixproof}
\end{toappendix}

In summary, our \(\FDMain\) algorithm works if \(\ddv\) only gets infection-time feedback or if we allow multiedges, lifting the two most restrictive assumptions previously made. We also prove that variations where \(\ddv\) needs to ensure the non-existence of a high number of edges or labels (such as the unknown static graph or multilabel variations) do not allow for significant improvements over the trivial algorithm and that our \(\FDMain\) algorithm is not applicable. Results and proofs of this section have been moved to the appendix due to space constraints.


\section{Experimental Evaluation} \label{sec:experiments}
The gap between the lower and upper bounds for the TGD problem is small, but only tells us about the worst-case performance, motivating us to investigate the performance of our TGD algorithm on common synthetic graphs and real-world data.
We formulate three hypotheses we aim to test.

\begin{hypothesis}
The number of rounds required to discover a temporal graph is linear in the number of edge labels.
\end{hypothesis}
This first hypothesis is motivated by the worst-case analysis from \Cref{thm:follow-discover}.
We aim to test how closely the performance of our algorithm in practice matches this theoretical bound.
Note, this also takes into account the effect of the additional optimization described in the setup.

\begin{hypothesis}
Graphs with higher density spend fewer rounds on component discovery.
\label{hypo:component-discovery}
\end{hypothesis}
The \(\FDMain\) algorithm works in two distinct phases: (1) the main loop in \Cref{alg:follow-discover} discovers new \dECCs{} (the \emph{component discovery phase}) and (2) the \(\ex\) routine explores the found components (the \emph{component exploration phase}).
Both of these stages require \(\ddv\) to spend rounds, and their respective cost is dependent on the structure of the graph to be explored.
As we prove in \Cref{thm:follow-discover}, the component discovery phase is triggered at most \(\defaultdecc\) times.
We hypothesize that \(\decc{\G}\) tends to be lower in denser graphs as the components tend to merge as more edges are inserted, which would lead to a relatively lower cost for component discovery as compared to component exploration.

\begin{hypothesis}
In Erdős-Renyi graphs, the percentage of rounds spend on component discovery shows a threshold behavior in \(\Tmax / (\nodesetsize{\defaultGraphName} \cdot p)\). This is mediated by \(\defaultdecc\).
\label{hypo:threshold}
\end{hypothesis}
\Cref{hypo:threshold} \todo{something is broken here with the referencing} is a specification of \Cref{hypo:component-discovery} for Erdős-Renyi graphs.
We can view this hypothesis as the temporal extension of the typical threshold behavior that static Erdős-Renyi graphs exhibit in \(np\) \cite{erdos1960evolution}.
We also conjecture this behavior holds provably in \Cref{conj:erdos}. \todo{Maybe we want to go a bit more into detail into what exactly the threshold behavior is. I am not sure whether everyone is so used to thresholds or knows the Erdos Renyi results.}

\paragraph{Setup.} We perform our evaluation on two data sets.
One is synthetically generated, while the other is real-world data.
First, we evaluate our algorithm on Erdős-Renyi graphs. While the model has been developed for static graphs, it is commonly extended to temporal graphs \cite{Angel_Ferber_Sudakov_Tassion_2020}.
To generate a simple temporal graph from an Erdős-Renyi graph, we simply choose each edge label uniformly at random from the set \([\Tmax]\).
This means that \(\Tmax\) is now the third parameter to generate these graphs, in addition to the classical ones \(n\) (the number of nodes) and \(p\) (the density parameter).
Denote such a graph as \(ERT(n, p, \Tmax)\).\\
Secondly, to evaluate our algorithm on real-world data, we employ a data set from the Stanford Large Network Dataset Collection \cite{kumar2021deception}.
The \texttt{comm-f2f-Resistance} collection is described by the project as a set of 62 “dynamic face-to-face interaction network[s] between a group of participants”.

In our implementation, we employ a small optimization upon \Cref{alg:follow-discover}. We skip what we call \emph{redundant seed infections}.
A seed infection \((v, t) \in \nodeset{\G} \times [0, \Tmax]\) is \emph{redundant} if we already know the labels of all edges adjacent to \(v\) at the time in the algorithm we would perform this seed infection.
This does not depend on \(t\).
For an illustration, consider \Cref{fig:follow-execution}.
There, the \(\FDMain\) algorithm would, after discovering the only edge adjacent to the leftmost node by a seed from its other adjacent node, perform seed infections at the leftmost node even though we already know the label of the only edge we could possibly discover.

We run \(\FDMain\) on all graphs from the SNAP dataset and for a wide range of parameters for the temporal Erdős-Renyi graphs.
We test with 1 to 100 nodes in steps of size 5, for probabilities $p \in \{.01, .05, .1, .15, .2, .25,$ $.3, .35, .4, .5, .7, .9\}$.
We pick \(\Tmax\) as a factor of \(n\), testing with $\Tmax  / n \in \{.05, .1, .2, .3, .4, .5, .7, .9, 1, 2, 3, 5, 7, 10\}$.
Similarly, \(\iphase\) is 1 or a multiple of \(\Tmax\), namely \(\iphase / \Tmax \in \{.01, .05, .1, .3, .5\}\)  
We record the number of rounds needed to complete the discovery and how many of these rounds are spent in the component discovery versus the component exploration phases of the algorithm. Both our implementation and analysis code are available under a permissive open-source license and can be used to replicate our findings.\footnote{The repository is publicly available and will be linked after the blind review.}

\begin{figure*}[t]
    \begin{subfigure}{0.73\textwidth}
        \centering
        \includegraphics[width=\linewidth]{./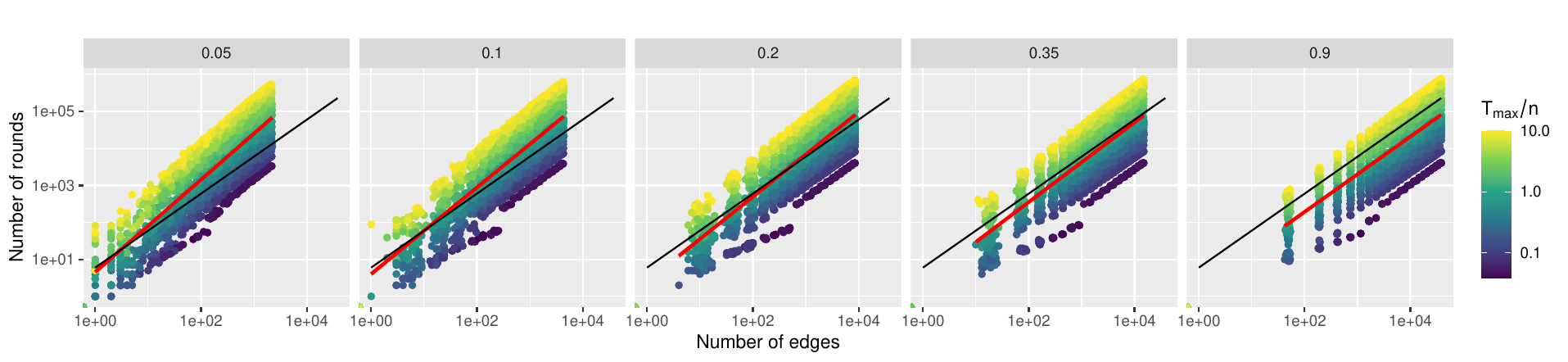}
        \caption{Erdős-Renyi graphs for different densities \(p\)}
        \label{fig:rounds_by_edge_count:erdos}
    \end{subfigure}%
    \hspace{0.04\textwidth}
    \begin{subfigure}{0.2\textwidth}
        \centering
        \includegraphics[width=\linewidth]{./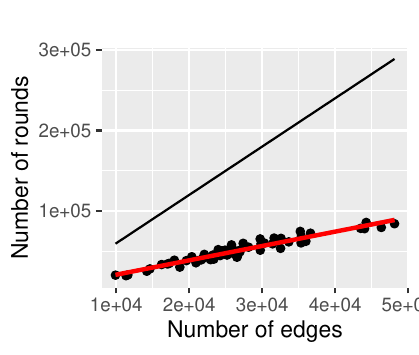}
        \caption{SNAP dataset}
        \label{fig:rounds_by_edge_count:snap}
    \end{subfigure}
    \caption{Number of rounds by number of edges for the \(\FDMainHack \) algorithm on the two data sets. The black line indicates the \(6\edgesetsize{\G}\) line (our theoretical bound for the component exploration phase), and the red line is the trend line (i.e.,~the linear regression).}
    \label{fig:rounds_by_edge_count}
\end{figure*}
\begin{figure*}[t]
    \begin{subfigure}{0.31\textwidth}
    \centering
    \includegraphics[width=\linewidth]{./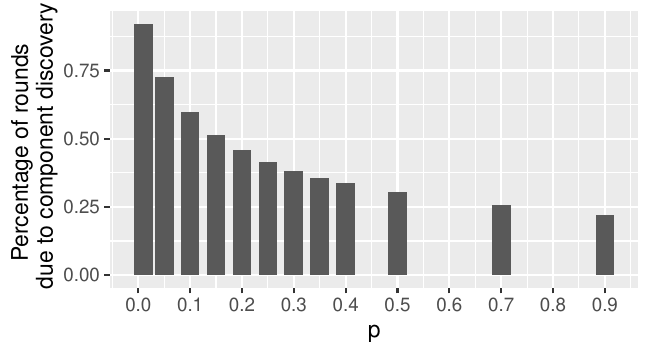}
    \caption{\label{fig:erdos-renyi_component-discovery-percentage-by-p}The percentage of rounds the \(\FDMainHack\) algorithm spends in the component discovery phase on temporal Erdős-Renyi graphs by density \(p\).}
    \end{subfigure}%
    \hfill
    \begin{subfigure}{0.31\textwidth}
        \centering
        \includegraphics[width=\linewidth]{./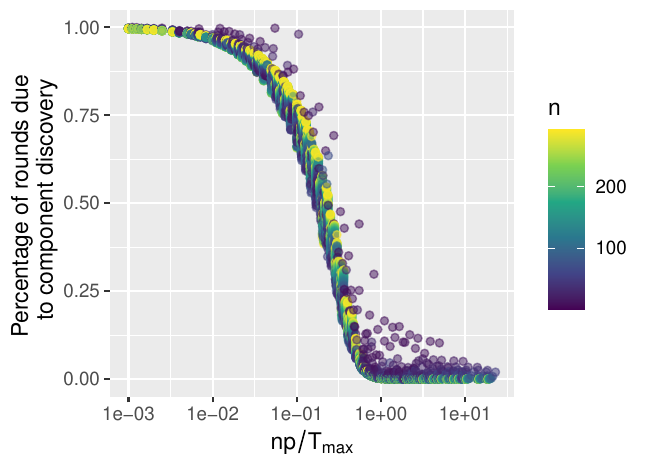}
        \caption{The percentage of rounds spent in the component discovery phase by \(np / \Tmax\).
        \label{fig:erdos-renyi_component-discovery-percentage-by-n-times-p-over-Tmax}}
    \end{subfigure}%
    \hfill
    \begin{subfigure}{0.31\textwidth}
        \centering
        \includegraphics[width=\linewidth]{./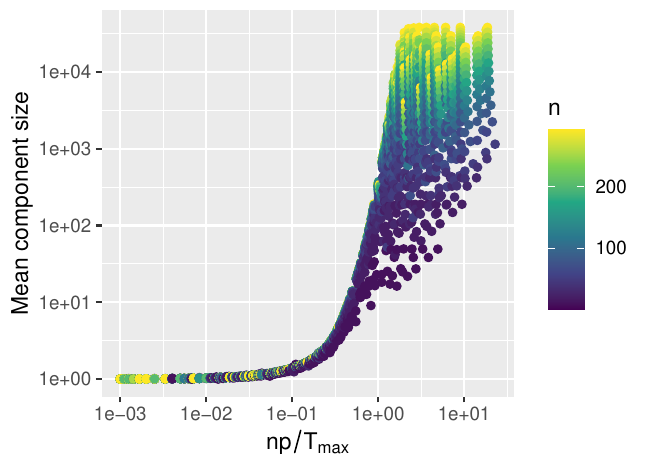}
        \caption{Mean size of the \dECCs{} by \(np / \Tmax\).
        \label{fig:erdos-renyi_component_mean_size-by-n-times-p-over-Tmax}}
    \end{subfigure}
    \caption{Cost of component discovery in the \(\FDMainHack \) algorithm. 
    Subfigures (b) and (c) use logarithmic axes, and the color denotes the number of nodes.}
    \label{fig:threshold}
\end{figure*}
\subsection{Results}
We now critically evaluate our hypotheses against the data thus obtained and compare the effects between the different data sets and parameters.
Particularly, we pay attention to evidence on how the different hypotheses interact.
Finally, we give \Cref{conj:erdos} as a result of our analysis of \Cref{hypo:threshold}.

\paragraph{Hypothesis 1.}
\begin{toappendix}
    With \Cref{fig:rounds_by_edge_count:erdos:full}, we provide an extended version of \Cref{fig:rounds_by_edge_count:erdos} for all tested values.
    Some of the values were ommited from the main part due to space constraints.

    \begin{figure}[tbhp]
    \centering
        \includegraphics[width=0.73\textwidth]{./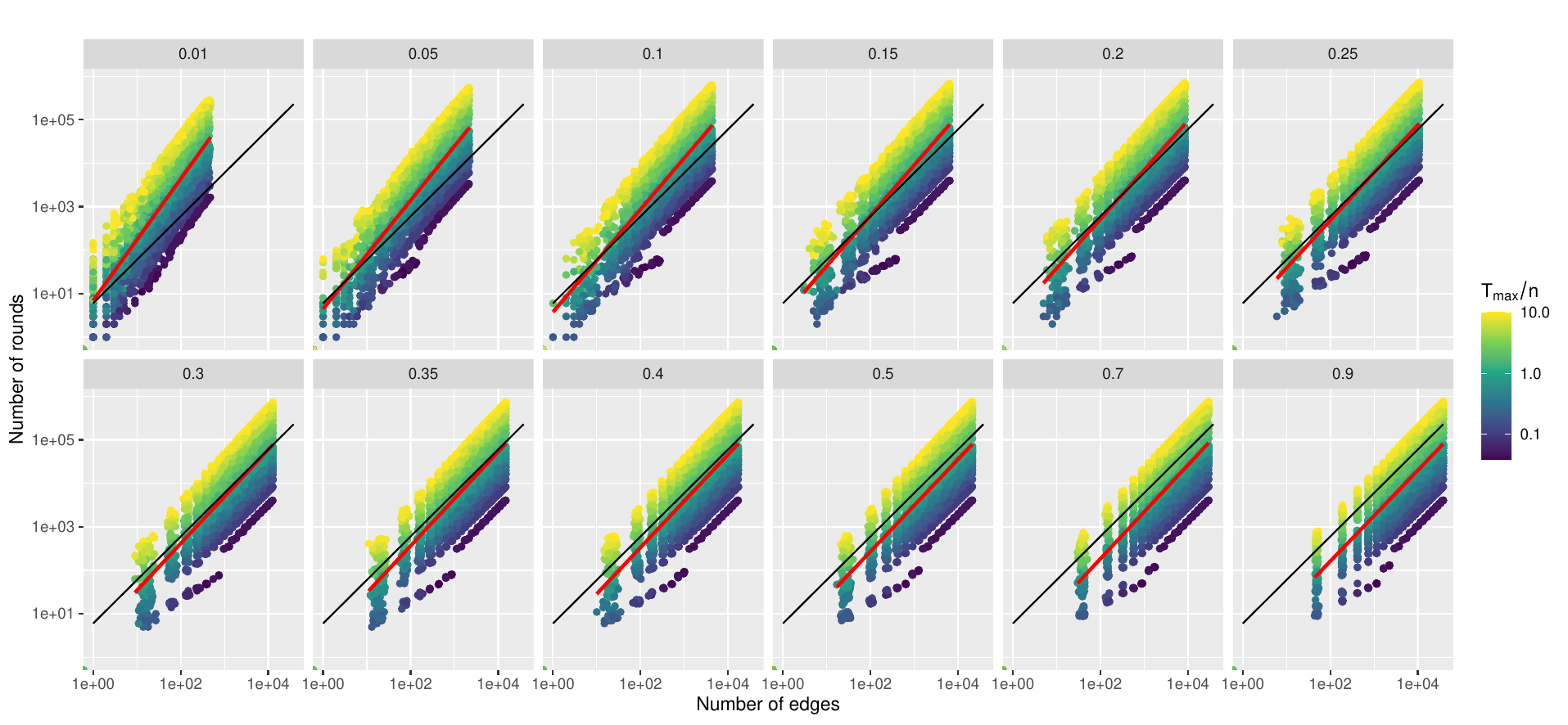}
        \caption{Erdős-Renyi graphs for different densities \(p\)}
        \label{fig:rounds_by_edge_count:erdos:full}
    \end{figure}
\end{toappendix}

To investigate our first hypothesis, we compare how the number of rounds relates to the number of edges.
See \cref{fig:rounds_by_edge_count} for the results on our datasets.
We can see that in Erdős-Renyi graphs, the relationship closely follows the \(6\edgesetsize{\G}\) line predicted by our theoretical results.
Clearly, this effect is influenced by other parameters such as \(p\), \(\Tmax\), and \(n\), whose roles we will examine in the discussion of the other hypotheses.
However, if we consider graphs with the same \(p\) and ratio \(\Tmax / n\) (i.e.,~one facet and one color in \cref{fig:rounds_by_edge_count}), we see that the relationship is strictly linear—the points form a tightly distributed straight line.
 We see that the trend for \(p \le 0.25\) is slightly above \(6\edgesetsize{\G}\) and slightly under it for larger values of \(p\).
This occurs since, when \(p\) is small, more time is spent on component discovery than on component exploration.
We will explore this more thoroughly in the discussion of the results regarding \Cref{hypo:component-discovery},

In the SNAP data set, the trend is linear in the number of edges, but significantly less than \(6\edgesetsize{\G}\) rounds are required to discover the graph.
In fact, the gradient of the regression line is only 1.78.
This can be explained by the optimization skipping redundant infections as outlined in our setup, as this enables the algorithm to require less than the \(6\) infections per edge, which would otherwise be strictly required.
In summary, while there is a strong linear relationship following the \(6\edgesetsize{\G}\) line, the number of rounds also significantly depends on other properties of the graph.
We can see these effects in both synthetic and real-world data.

\paragraph{Hypothesis 2.}
\label{sec:org8ac9a1e}

As this hypothesis is about the relationship between graph density and time spent on component discovery, we only analyze it on the Erdős-Renyi graphs. The SNAP dataset does not have significant differences in graph density.

In \Cref{fig:erdos-renyi_component-discovery-percentage-by-p}, we plot the percentage of time spent on component discovery dependent on the parameter \(p\) (which specifies the density of the graph).
We observe a clear and strong, inversely proportional relationship.
This leads us to accept \cref{hypo:component-discovery}.
This is explained from the theoretical analysis of \(\FDMain\), as we expect denser graphs to have fewer \dECCs{}, thus less need to discover new components.

\paragraph{Hypothesis 3.}
\label{sec:org6b7b535}

Examining \cref{fig:erdos-renyi_component-discovery-percentage-by-n-times-p-over-Tmax} \todo{where is the figure gone}, we see a threshold between \(np/\Tmax = 0.01\) and \(np/\Tmax = 1.00\) where we move from spending only a small amount of rounds on component discovery to spending close to all of our rounds on component discovery.
Our hypothesis, that this is mediated by the size of the \dECCs{}, is supported by the results in \cref{fig:erdos-renyi_component_mean_size-by-n-times-p-over-Tmax}.
The additional differentiation by color for larger values of \(np / \Tmax\) can be explained since the size of a component is constrained by the size of the graph.

Note that big \dECCs{} imply that we can discover many edges in the component exploration phase for a single node explored in the component discovery phase.
That means the plotted percentage shrinks as the average size of \dECCs{} increases.
This behavior is similar to that seen in static Erdős-Renyi graphs, where if \(np=1\), there almost surely is a largest connected component (in the classical sense) of order \(n^{2/3}\).
And if \(np\) approaches a constant larger than 1, there asymptotically almost surely is a so-called giant component containing linearly many nodes \cite{erdos1960evolution}.
This motivates us to conjecture the following similar behavior.
\begin{conjecture}
Let \(\iphase = 1, p \in [0,1], \Tmax, n \in \N\) and \(\G \coloneqq ERT(n, p, \Tmax)\). Then
\begin{itemize}[topsep=5pt]
\item if \(\Tmax/(np) \xrightarrow{n \to \infty} c\) where \(c < 1\), then \(\decc{\G} / \edgesetsize{\G} \xrightarrow{n \to \infty} 0\), and
\item if \(\Tmax/(np) > 1\), then almost surely, all \dECCs{} have size at most \(\BigO(\log n^2) = \BigO(\log n)\).
\end{itemize}
 \ifpaper \else {\vspace{-1em}}\fi
\label{conj:erdos}
\end{conjecture}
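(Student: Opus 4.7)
The plan is to recast $\iphase$-edge connectivity (for $\iphase = 1$) as a component problem on an auxiliary random graph $H$ whose vertices are the temporal edges of $\G$ and whose edges link any two temporal edges that share an endpoint in the underlying static graph and whose time labels differ by at most $1$. By the definition of the \dECCs{}, the connected components of $H$ are exactly the \dECCs{} of $\G$, so both halves of the conjecture become statements about component sizes in this random graph.

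First, I would compute the expected $H$-degree of a fixed temporal edge $e = uv$ with label $t$. The $2(n-2)$ potential adjacent edges each exist independently with probability $p$ and, conditionally on existing, carry a label in $\{t-1, t, t+1\}$ with probability roughly $3/\Tmax$, yielding an expected $H$-degree of approximately $6np/\Tmax$. This already suggests a threshold of order $\Tmax/(np)$; to pin the precise location at $\Tmax/(np) = 1$, I would refine the count by grouping the edges incident to each shared endpoint into \emph{blocks} of consecutive labels, since only the number of such blocks (not the number of edges per block) controls how the branching proceeds through a common vertex.

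For the subcritical regime $\Tmax/(np) > 1$, I would couple the $H$-BFS exploration from any starting edge with a subcritical Galton--Watson branching process whose mean offspring is strictly below $1$. Exponential tail bounds on total progeny then give a.s. component size $\BigO{\log n}$ after a union bound over all $\BigO{n^2}$ potential starting edges, matching the conjectured $\BigO{\log^2 n} = \BigO{\log n}$. For the supercritical regime $\Tmax/(np) \to c < 1$, I would adapt the standard two-round exploration / sprinkling argument that proves the Erdős--Renyi giant component, producing a single \dECC{} whose size is linear in $\edgesetsize{\G}$ and thereby forcing $\decc{\G}/\edgesetsize{\G} \to 0$.

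The main obstacle is that $H$ is far from an Erdős--Renyi graph: temporal edges sharing a common endpoint exhibit correlated $H$-incidences, and labels at a common vertex automatically cluster into triangles and larger cliques whenever several neighbors happen to fall within the same label window. Controlling these correlations tightly enough to pin the threshold at the precise constant $1$ (rather than some other multiple of $\Tmax/(np)$) is the delicate part; I expect one would need a careful second-moment or Chen--Stein argument on the number of small \dECCs{}, mirroring the classical threshold proofs for connected components in static Erdős--Renyi graphs.
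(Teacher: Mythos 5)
First, note that the paper does not prove this statement: it is explicitly labelled a conjecture, is supported only by the experiments of Section 7, and is listed in the conclusion as future work. So there is no proof in the paper to compare against; what you have written is a research plan for an open problem, and it has to be judged on its own terms.

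Your reduction to component sizes in the auxiliary graph $H$ (vertices = temporal edges, adjacency = shared endpoint and label difference at most $\iphase$) is correct and is the natural first step. But there are two concrete gaps. (1) Your own first-moment computation gives mean offspring $\approx 6np/\Tmax$ in the exploration of $H$, so the Galton--Watson coupling you invoke is subcritical only when $\Tmax/(np) > 6$; for $1 < \Tmax/(np) \le 6$ the naive branching process is \emph{super}critical and the coupling proves nothing. The ``blocks of consecutive labels'' refinement is asserted, not carried out, and it is exactly the open question whether it (or anything) moves the critical constant to $1$; as stated, the second bullet is unproved on the entire range $(1,6]$. (2) In the supercritical regime, exhibiting one \dECC{} of size linear in $\edgesetsize{\G}$ does not imply $\decc{\G}/\edgesetsize{\G} \to 0$: the number of components is controlled by the edges \emph{outside} the giant component, and a giant component containing a constant fraction $\alpha<1$ of the edges is compatible with the ratio staying near $1-\alpha$. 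Worse, a direct first-moment count casts doubt on the first bullet itself for constant $c\in(0,1)$: a fixed temporal edge $uv$ with label $t$ is isolated in $H$ with probability about $(1-3p/\Tmax)^{2(n-2)} \to e^{-6np/\Tmax} = e^{-6/c} > 0$, so in expectation a positive constant fraction of the $\Theta(n^2p)$ temporal edges form singleton \dECCs{}, which (after a routine second-moment argument) would keep $\decc{\G}/\edgesetsize{\G}$ bounded away from $0$. Any honest attempt at this conjecture has to either resolve this apparent contradiction or conclude that the first bullet only holds when $c=0$; your proposal does not engage with it.
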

Investigating the connectivity behavior of random temporal graphs in a way that respects their inherent temporal aspects is an interesting and little understood problem \cite{casteigts_threshold,casteigts2024simple}.
The authors theoretically study sharp thresholds for connectivity in temporal Erdős-Renyi graphs, and conjecture about a threshold for the emergence of a node-based giant (i.e.,~linear size) connected component.
This conjecture can be seen as capturing the equivalent behavior under waiting time constraints (i.e.,~where edges are only considered if their edge labels do not differ too much).

\section{Conclusion} \label{sec:conclusion}
We give a comprehensive theoretical and empirical study of TGD in temporal graphs. 
\(\FDMain\) provides an efficient TGD strategy requiring only \(6\edgesetsize{\G} + \decc{\G} \left\lceil \Tmax/\iphase \right \rceil\) rounds.
Our lower bound proves that any algorithm must spend at least \(\Omega(\edgesetsize{\G})\) rounds, showing \(\FDMain\) is close to optimal.
Our empirical analysis highlights the relevance of our theoretical results for practical applications and gives rise to interesting insights of its own.
We see that on Erdős-Renyi graphs, the observed performance of \(\FDMain\) matched our theoretical analysis.
On real-world data from the SNAP collection, the algorithm even slightly outperforms our predictions.
Finally, we observe a close link between the parameters of the temporal Erdős-Renyi model, the temporal connectivity structure of the resulting graphs, and our algorithmic performance, creating a bridge between our theoretical insights on \(\iphase\)-connected components and their empirical behavior.

Future work can tighten the lower bound from \Cref{thm:temporally-connected-graph-discovery-lower-bound} and give a lower bound that is tight in \(k\) and \(\iphase\),
Another avenue is to investigate \(\defaultdecc\) further. In particular, to prove or disprove the observed threshold in Erdős-Renyi graphs (\Cref{conj:erdos}) and investigate an analog to the static \(\ln n / n\) threshold. 
Finally, future research can explore more variations of TGD, such as finding specific nodes or checking for structural properties instead of discovering the whole graph.


\newpage

\bibliographystyle{named}
\bibliography{papers}

\listoftodos{}

\clearpage
\onecolumn
\appendix
\renewcommand{\appendixpagename}{Supplementary Material for Submission XXX}

\end{document}